\title{Simplicial Approximation of Deforming 3D Spaces\\for Visualizing Fusion Plasma Simulation Data}
\author{%
  \authororcid{Congrong~Ren}{0009-0006-6285-7271}
  and 
  \authororcid{Hanqi~Guo}{0000-0001-7776-1834}
}
\abstract{%
  We introduce a fast and invertible approximation for data simulated as 2D planar meshes with connectivities along the poloidal dimension in deforming 3D toroidal (donut-like) spaces generated by fusion simulations. In fusion simulations, scientific variables (e.g., density and temperature) are interpolated following a complex magnetic-field-line-following scheme in the toroidal space represented by a cylindrical coordinate system. This deformation in 3D space poses challenges for visualization tasks such as volume rendering and isosurfacing. To address these challenges, we propose a novel paradigm for visualizing and analyzing such data based on a newly developed algorithm for constructing a 3D simplicial mesh within the deforming 3D space. Our algorithm introduces no new nodes and operates with reduced time complexity, generating a mesh that connects the 2D meshes using tetrahedra while adhering to the constraints on node connectivities imposed by the magnetic field-line scheme. In the algorithm, we first divide the space into smaller partitions to reduce complexity based on the input geometries and constraints on connectivities. Then we independently search for a feasible tetrahedralization of each partition taking nonconvexity into consideration. We demonstrate use cases of our method for visualizing XGC simulation data on ITER and W7-X.
}
\keywords{Simplicial meshing, triangulation, spatially-varying mesh, isosurfacing, volume rendering.}
\begin{document}


\firstsection{Introduction}
\label{sec:intro}
\maketitle
Fusion energy research, particularly in plasma physics, seeks to achieve sustained nuclear fusion by controlling high-temperature plasma, demanding a deep understanding of the complex dynamics involved~\cite{han2022sustained,linke2019challenges,miyamoto2005plasma}. The X-Point Included Gyrokinetic Code (XGC) simulates plasma behavior in fusion devices such as tokamaks and stellarators, which use magnetic fields to confine plasma within a torus geometry~\cite{chang2006integrated,d2017fusion}. In tokamaks, the magnetic field follows a uniform confinement around the torus (\Cref{fig:teaser} (a) upper), while stellarators use more intricate, twisted magnetic fields to stabilize the plasma (\Cref{fig:teaser} (a) lower)~\cite{xu2016general}. In both configurations, plasma moves along helical magnetic field lines, leading to complex deformation patterns of physical quantities like temperature and density~\cite{helander2012stellarator}. In XGC simulation data, science variables are modeled on 2D triangular meshes along the toroidal dimension (\Cref{fig:teaser} (b))~\cite{xgc_mesh}. As a result, both the symmetrical field of the tokamak and the twisted geometry of the stellarator form deforming 3D spaces.

In our observation, to date, deforming space is either overlooked or induces high complexity in scientific visualization. In the former case, physical insights between discrete samples are ignored because usually the deforming data are visualized only at these discrete planes. In the latter case, deforming space challenges visualization and analysis tasks. For example, magnetic-line-following interpolation incurs prohibitively high computational costs in visualization tasks such as volume rendering: interpolating science variables (e.g., density and temperature) in fusion plasma requires tracing streamlines in a magnetic field~\cite{white2017theory}. Furthermore, such a nonlinear interpolation basis makes it even more difficult for root-finding tasks, such as extracting isosurfaces where the interpolated value equals a given value and finding critical points where interpolated vector field values vanish. Later sections will refer to the interpolation and root-finding processes as \emph{forward} and \emph{inverse approximations}, respectively, as we further motivate visualization with deforming space.

In this work we explore a novel paradigm to represent deforming space by (simplicial) 3D meshes to simplify the visualization and analysis of fusion simulation data. Our simplicial mesh offers a fast, continuous, and invertible data representation of spatially varying data, making forward and inverse approximations possible with less computational cost and facilitating visualization tasks, including volume rendering and isosurfacing, using off-the-shelf visualization tools such as ParaView~\cite{paraview}. Although our primary application driver is fusion plasma, our methodology is applicable also in other scientific domains, such as solid mechanics, where researchers encounter challenges in modeling the deformation of elastic materials undergoing torsion or bending (e.g., metals or rubber)~\cite{rayhan2020modeling} and molecular dynamics, where protein folding and DNA twisting involve continuous structural deformations~\cite{gromiha1996anisotropic,yang1993finite}.

Formally, the inputs are a series of 2D simplicial meshes along toroidal axis, each representing the spatial discretization at some toroidal angle $\phi$. The assumption reflects real application scenarios: most simulations output data as a simplicial mesh. The deformation of space is explicitly defined as connectivities of mesh nodes in adjacent steps (i.e., the ``next node'' in fusion applications, as explained in~\Cref{sec:preliminaries}). In other words, these node connectivities are a discrete representation of how spatial locations are transformed in adjacent steps. The connectivities may be available from data~\cite{xgc_mesh} or derived by tracing particle trajectories based on the underlying physics. 3D deforming meshing from 2D triangular meshes and node connectivities then imposes multiple geometrical constraints on the resulting mesh: (1) the mesh includes all simplices (e.g., triangles, edges, and nodes) from the input spatial meshes without adding new nodes (i.e., \textit{Steiner points}); (2) the mesh contains all edges that connect nodes in adjacent steps, as specified by node connectivities; (3) the mesh is simplicial (containing only tetrahedral cells) for visualization and analysis tasks.

A new methodology is required to search for a triangulation of a deforming space subject to the above-described constraints with a reasonable time cost, considering the complexity introduced by these constraints. General-purpose triangulation tools like TetGen~\cite{hang2015tetgen} and Gmsh~\cite{geuzaine2009gmsh}, which rely on Delaunay triangulation~\cite{chew1987constrained}, pose two key challenges. First, these tools have significant time complexity, typically scaling as $O(|\mathcal{V}|\log|\mathcal{V}|)$~\cite{chew1987constrained}, where $|\mathcal{V}|$ represents the number of input vertices. In the case of deforming 3D meshes, this complexity can escalate to $O(|\mathcal{V}|^2)$ due to degenerate cases where all nodes lie on two planar meshes, complicating the determination of a unique Delaunay triangulation~\cite{su1995comparison}. Second, these tools introduce unnecessary Steiner points to satisfy Delaunay conditions, resulting in computational overhead for evaluating values at these points. However, Delaunay conditions are unnecessary, as the connectivity constraints already ensure the accuracy of the approximation.

In this paper, we propose a divide-and-conquer algorithm that divides the space into small and independent partitions (the \textit{divide} stage) and then triangulates each partition (the \textit{conquer} stage). The divide stage aims to reduce computational costs by decomposing the problem into independent and manageable subproblems of triangulating partitions.  For shared boundaries of neighboring partitions, we define a rule to triangulate the boundaries so that each partition can be processed independently. The conquer stage focuses on each individual partition, which is triangulated by iterating over nodes and eliminating all tetrahedra that connect to a node. A decision-tree-based search is applied to decide which node to eliminate and how to triangulate its surrounding with all geometry (e.g., nonconvexity) and connectivity constraints considered.  Once we triangulate all partitions, the resulting mesh can enable and simplify many visualization tasks, and we demonstrate the use cases of our meshes with fusion plasma data. In summary, our framework makes the following contributions:
\begin{itemize}[topsep=1ex,itemsep=-1ex,partopsep=1ex,parsep=1ex]
    \item A novel paradigm to simplify visualization of scientific data from fusion plasma simulations;
    \item A practical meshing algorithm for deforming 3D space with constraints including a series of 2D simplicial meshes and node connectivities between adjacent meshes.
\end{itemize}

\section{Background and Related Work}
\label{sec:related}

We review several basic concepts in triangulation and the most relevant literature on it. While few directly tackle triangulation in deforming space, triangulation (primarily in 2D and 3D spaces for real-world applications) is widely studied in geometric modeling, computer graphics, and computational sciences; readers are referred to the work of De Loera et al.~\cite{de2010triangulations} for a comprehensive understanding.

\textbf{Simplices and simplicial prisms}. An \textit{$n$-simplex} is a convex hull with $n+1$ nodes that are affinely independent in $\mathbb{R}^n$. For example, a 0-simplex is a point, a 1-simplex is a line segment, a 2-simplex is a triangle, and a 3-simplex is a tetrahedron. An \textit{$(n+1)$-D simplicial prism} (also referred to as $(n+1)$-prism or prism) is derived by extruding an $n$-simplex to one dimension higher (\Cref{fig:deforming}).  Note that prisms are usually nonsimplicial; for example, a 3D triangular prism consists of two congruent triangles and three quadrilaterals. 

\textbf{Polytopes and their triangulation}. An \textit{$n$-polytope} is a geometry object in $\mathbb{R}^n$ whose faces are all \textit{flats}~\cite{sommerville2020introduction}. A triangulation of an $n$-polytope is a subdivision of the polytope into a finite collection of $n$-simplices such that the union of all simplices is equal to the polytope (\textit{union property}) and intersection of any pair of simplices is a common $(n-1)$-facet or empty (\textit{intersection property})~\cite{de2010triangulations}. We focus on 3-polytopes, namely,  \textit{polyhedra}, that may or may not be convex in this study. Triangulation of a polyhedron is also referred to as \textit{tetrahedralization} in the following.

\textbf{Triangulation problems related to this research}. We relate our research to three notoriously challenging triangulation problems: (1) triangulation without additional (Steiner) points~\cite{si2019simple}, (2) triangulation with low time complexity, and (3) triangulation with internal edge constraints~\cite{bajaj1996surface,bajaj1999tetrahedral,hang2015tetgen}. First, we aim to avoid Steiner points, as data beyond the nodes of the spatial mesh are often not readily available. Triangulation without Steiner points introduces significant complexity and is not always achievable; for instance, there exist nontriangulable polytopes, such as the \textit{Sch\"{o}nhardt polyhedron} illustrated in~\Cref{fig:staircase}~{(a)}~\cite{de2010triangulations}. This polytope can be constructed by rotating one of two parallel equilateral faces of a 3D triangular prism and inserting opposite diagonals in previous rectangles. Second, the time complexity associated with triangulating a nonconvex polytope is considerable, with estimates approaching $O(|\mathcal{V}|^2)$~\cite{geuzaine2009gmsh,hang2015tetgen} and potentially reaching $O(|\mathcal{V}|^3)$~\cite{chazelle1990triangulating,si2019simple}. Third, existing research on 3D triangulation predominantly emphasizes the constraints imposed by the surface triangular mesh of a bounded 3D domain, often neglecting interior edge constraints within the domain. 

\textbf{Triangulation in computational sciences}. Meshing 4D spacetime (3D space plus 1D time) in computational science, which has been recently explored to simulate partial differential equations directly without using traditional 3D meshes and timestepping methods~\cite{erickson2005building, ishii2019solving}, is related to but has fundamentally different goal from this paper; we attempt to represent science variables available on spatial meshes, which are still the prevailing way to represent field data in today's computational sciences. While native spacetime meshes may interpolate and represent time-varying variables as conventional meshes, spacetime simulations are subject to high complexity, increased memory footprint, and numerical instabilities. For example, in finite element methods (FEMs), one can create and update an $(n+1)$-D spacetime mesh by adding $(n+1)$-simplices on an $n$-D spatial mesh along the time dimension~\cite{thite2009adaptive,wang2015high,tavelli2016staggered}. This method forms a simplicial complex with a terrain-like surface as spacetime. Researchers have expressed interest in different phenomena, such as wave propagation~\cite{thite2009adaptive} and rotation~\cite{von2021four}, that form spacetimes with different shapes (e.g., cone, prism) or time-variant topology of the spatial mesh.

\textbf{Triangulation in visualization} has focused on nondeforming spacetime so far, primarily for feature tracking problems.  Given a time-invariant spatial discretization, for example, a triangular mesh, one can establish a prismatic mesh connecting all corresponding nodes in spacetime.  One can further tessellate the prismatic mesh with the staircase triangulation scheme~\cite{de2010triangulations, guo2021ftk}, which we review in the next section.

\begin{figure*}
    \centering
    \includegraphics[width=\textwidth]{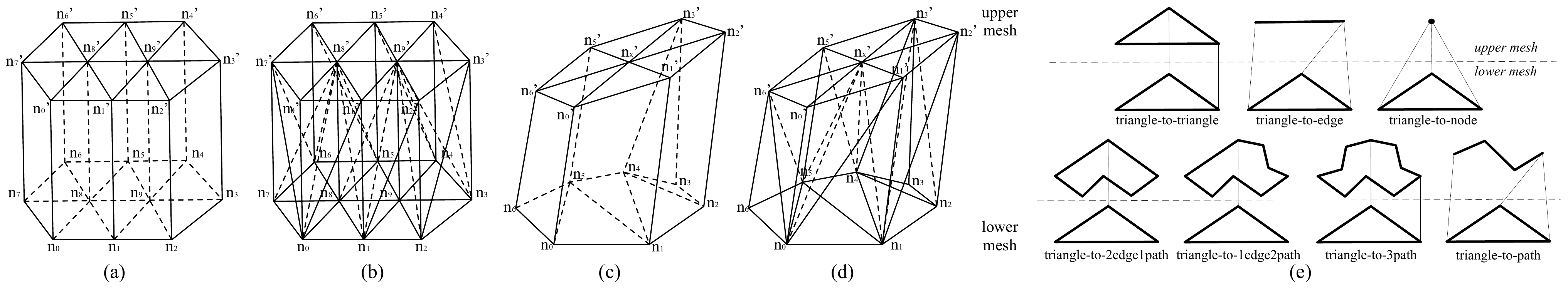}
    \caption{
    Difference between nondeforming and deforming space meshes. (a) An example of nondeforming 3D prismatic mesh obtained by extruding a 2D spatial mesh. (b) A simplicial subdivision of (a). (c) An example of deforming 3D mesh induced by spatially-varying behavior, which has more complex correspondence between the lower and upper meshes: nodes $n_3$ and $n_4$ have the same next node; node $n_x$ has no previous nodes; triangulations in the lower and upper meshes are different; the lower mesh needs translating, rotating, or hybrid transformation to be aligned with the upper mesh. All these differences make it difficult to build a simplicial meshing for the deforming space. (d) A simplicial subdivision of (c). (e) Seven \textit{triangle-to-what} scenarios in deforming 3D spaces. Connected components in lower and upper meshes are linked by next-node correspondence. Adjacent nodes might still be adjacent (in \textit{triangle-to-triangle} scenario), merge (in \textit{triangle-to-edge}, \textit{triangle-to-node}, and \textit{triangle-to-path} scenarios) or become non-adjacent (in \textit{triangle-to-2edge1path}, \textit{triangle-to-1edge2path}, and \textit{triangle-to-3path} scenarios) in the next plane. 
    }
    \label{fig:deforming}
\end{figure*}

\section{Formulation and Preliminaries}
\label{sec:preliminaries}

\begin{figure}[!t]
  \centering
  \includegraphics[width=0.9\columnwidth]{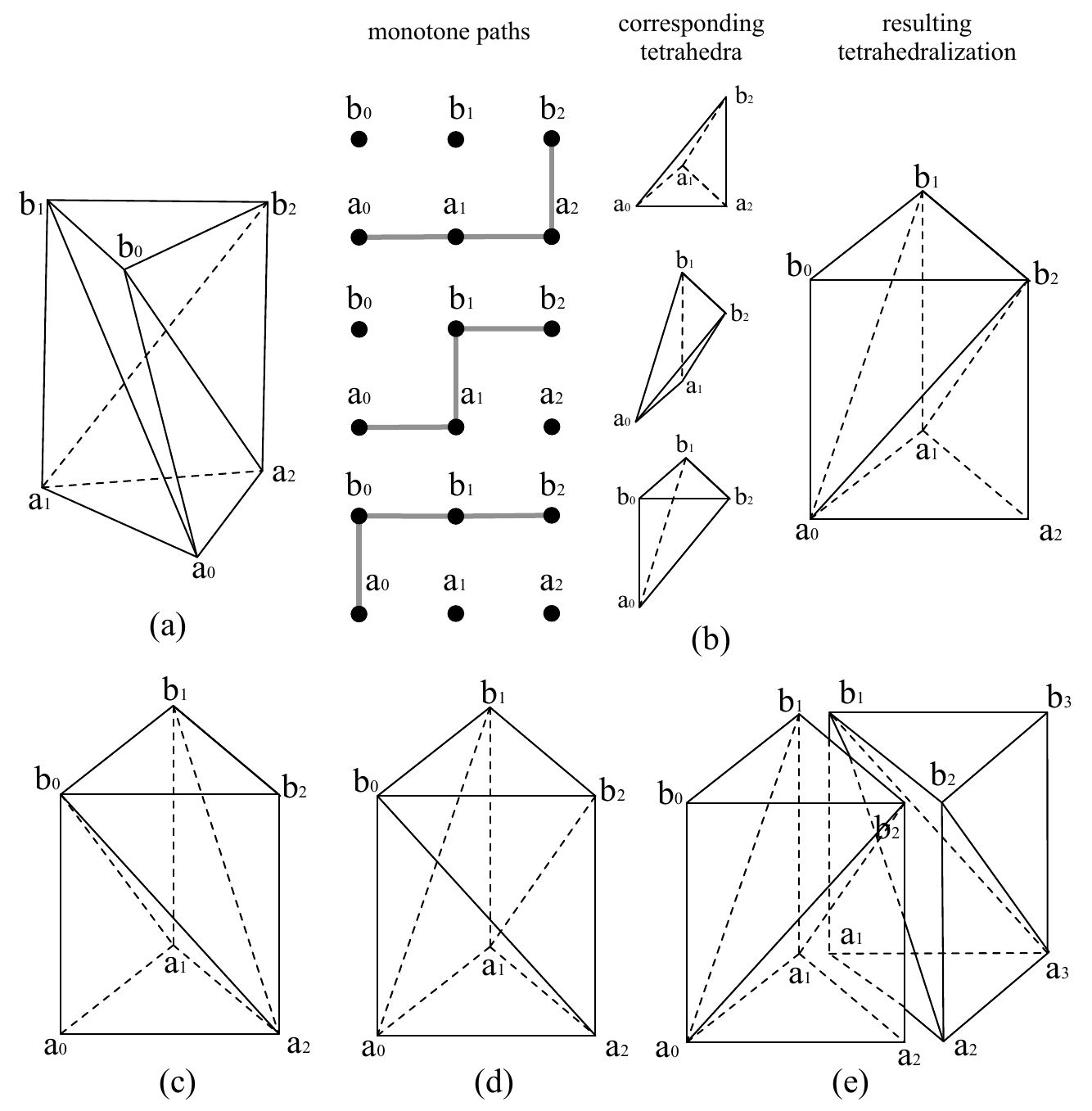}
  \caption{
  Subdivision of a 3D prism. (a) A Sch\"{o}nhardt polyhedron, which is the simplest nonconvex nontriangulable polyhedron. (b) Monotone paths, corresponding tetrahedra, and simplicial subdivisions of a 3D prism by staircase triangulation, with the assumption that the 3 node indices satisfy $a_0<a_1<a_2$. Nodes in the lower mesh are only connected to nodes in upper mesh whose previous nodes have greater indices. (c) A simplicial subdivision of the 3D prism without considering staircase. (d) If all  3 quads are separated by connecting top-left corner and bottom-right corner, then there is no way to triangulate the prism without additional nodes. (e) Two adjacent prisms have conflictive triangulation ways on the common quadrilateral face $a_1-a_2-b_2-b_1$.
  }
  \label{fig:staircase}
\end{figure}

This paper considers the deforming 3D space induced by a magnetic field $\mathbf{B}: \mathbb{R}^3\to\mathbb{R}^3$ (or $\mathbb{R}^3\to\mathbb{R}^2$ if the deformation is invariant over $\phi$) in fusion plasma. Assuming Lipschitz continuity of $\mathbf{B}$, the \emph{deformation} $\Phi: \mathbb{R}^4\to\mathbb{R}^2$ is a continuous function representing the solution of the following initial value problem: 

\begin{equation}
  \frac{\partial\Phi(\mathbf{p}_0, \phi_0, \phi)}{\partial\phi} = \mathbf{B}(\Phi(\mathbf{p}_0, \phi_0, \phi))~\mbox{and}~\Phi(\mathbf{p}_0, \phi_0, \phi_0)=\mathbf{p}_0,
\end{equation}

where $\mathbf{p}_0=(r_0,z_0)^\intercal$ is a 2D spatial location in poloidal plane at $\phi_0$ and $\Phi(\mathbf{p}_0, \phi_0, \phi)$ denotes the $(r,z)^\intercal$ of $\mathbf{p}_0$ after deformation at $\phi$. With the deformation, one can define the field-following interpolation as 
\begin{equation}\label{eq:ff}
  f(\mathbf{p}, \phi) = \beta f(\Phi(\mathbf{p}, \phi, \phi_0)) + (1-\beta) f(\Phi(\mathbf{p}, \phi, \phi_1)),
\end{equation}
where $f: \mathbb{R}^3\to\mathbb{R}$ is a scalar function (e.g., temperature in fusion plasma) and the linear weight $\beta$ is $(\phi_1 - \phi) / (\phi_1 - \phi_0)$, $\phi_0\leq \phi\leq \phi_1$. 

In a discrete sense, we rephrase the objective of the deforming 3D space meshing in Section 1 as follows. Assuming a sequence of $\phi_i$'s with $\phi_0\le \phi_1 \le \ldots \phi_i \le \ldots$ is associated with a 2D simplicial mesh $M_i=\left<V_i, E_i\right>$, where $V_i=\{v_0^{(i)},v_1^{(i)},...\}$ is the set of nodes in $M_i$ and $E_i=\{e_0^{(i)},e_1^{(i)},...\}$ is the set of edges in $M_i$, $v_j^{(i)}$ denotes the $j$-th node and $e_j^{(i)}$ denotes the $j$-th edge in $M_i$. A deforming space mesh is a 3D simplicial mesh $\mathcal{M}=\left<\mathcal{V}, \mathcal{E}\right>$ that contains all nodes $\mathcal{V}=\cup_i V_i$ and edges $\mathcal{E}=\left(\cup_i E_i\right) \cup \left(\cup_i\cup_{j,k} v^{(i)}_j v^{(i+1)}_k \right)$.  The node connectivity $v^{(i)}_j v^{(i+1)}_k$ between $M_i$ and $M_{i+1}$ is defined by the deformation $\Phi$ such that $v^{(i+1)}_k \approx \Phi(v^{(i)}_j, \phi_i, \phi_{i+1})$. We need to consider spatial meshes in two consecutive planes (namely, the \emph{lower} and \emph{upper meshes}), which can be trivially applied to every two adjacent planes for multiple planes. We first review the trivial case, triangulation of nondeforming space ($\mathbf{B}=\mathbf{0}$), and then consider challenges in deforming space.

\subsection{Meshing Nondeforming Space with Staircase Triangulation}
\label{sec:staircase}

With nondeforming space, assuming the underlying spatial mesh $M_i$ is identical for all poloidal planes, the problem is reduced to the following: each node in the lower mesh connects to the same node in the upper mesh.  As a result, each lower-mesh triangle extrudes into a prism, which can be further subdivided into three tetrahedra with \emph{staircase triangulation}~\cite{de2010triangulations}, as illustrated in~\Cref{fig:staircase} (b). As such, staircase triangulation can help mesh 3D nondeforming space~\cite{guo2021ftk} and serve as a basis of our methodology. 

\textbf{Staircase triangulation} provides an arbitrary rule to triangulate a prism based on node indices. Consider a prism $a_0a_1a_2-b_0b_1b_2$ extruded by a triangle $a_0a_1a_2$. With staircase triangulation, each tetrahedron corresponds to a \textit{monotone path} where both alphabets and subscripts are ascending. For example, all monotone paths for prism $a_0a_1a_2-b_0b_1b_2$ are $a_0a_1a_2b_2$, $a_0a_1b_1b_2$, and $a_0b_0b_1b_2$, as shown in~\Cref{fig:staircase}~{(b)} left.  Each path represents a tetrahedron separated from original prism by staircase triangulation (\Cref{fig:staircase}~{(b)} middle), and all tetrahedra corresponding to monotone paths constitute a tetrahedralization of the prism (\Cref{fig:staircase}~{(b)} right).  

\textbf{Meshing the nondeforming space}.  Staircase triangulation directly generalizes to multiple prisms and makes it possible to triangulate prismatic meshes (\Cref{fig:deforming} (a)) extruded from a 2D spatial mesh into 3D (shown in~\Cref{fig:deforming} (b)). 
Note that although staircase triangulation is not the only way to triangulate each prism (see \Cref{fig:staircase} (c) as an example), indices that give a unique ordering of nodes define an elegant and consistent schema to represent a nondeforming space as a simplicial complex. For example, for the two adjacent prisms in~\Cref{fig:staircase}~{(e)}, both prisms triangulate the shared quadrilateral face in a consistent manner by staircase, creating a shared edge $a_1b_2$ that connects the lower index node ($a_1$) to a higher index node ($b_2$); the resulting triangulation includes no conflicting edge, such as $b_2a_1$.  

\subsection{2.5D Representation of Deforming Space: Lower/Upper Meshes and Next Node Connectivity}
\label{sec:space_formulation}

This paper considers the node connectivity defined by \textit{next node} between adjacent meshes; that is, each node in the lower mesh connects to another node in the upper mesh. Next nodes are directly given as output of XGC simulation~\cite{xgc_mesh} and widely used to approximate plasma behavior~\cite{araki2014magnetic,dillow2011enhancing,patel2016design}. For example, in tokamaks or stellarators where the magnetic field defines the deformation, by placing a particle at a mesh node $v$ in a lower mesh, scientists can build a connection between $v$ and the node $v^\prime$ (called \textit{next node} of $v$) closest to the particle's position in the upper mesh (\Cref{fig:teaser} (b)).

\textbf{2.5D independent partition}. Denote the lower and upper spatial meshes by $S$ and $S^\prime$, respectively. We define a \emph{2.5D independent partition} $PQ$, where $P$ and $Q$, respectively, is a simply connected component in the lower and upper meshes such that (1) any node $v\in P$ maps to another node $v^\prime\in Q$ and (2) any node $v\in \partial P\setminus\partial S$ maps to another node $v^\prime\in \partial Q\setminus\partial S^\prime$; the symbol $\partial$ denotes the boundary of a partition.  Note that $P$ or $Q$ may be degenerate, such as a single node or one or multiple edges.  For example, the entire lower and upper meshes are a 2.5D independent partition. For another example, assuming $P$ contains one single triangle, the next-node mapping implicitly determines diverse situations on how $P$ maps to upper-mesh elements, which could be a single node, one or multiple edges, or one or multiple triangles as finite scenarios illustrated in~\Cref{fig:deforming} (e). 
The seven scenarios indicate different local deformation behaviors in space. If we consider the deformation in the scale of edge, then two adjacent nodes may still be mapped to two adjacent nodes (small change of relative distance between the two nodes), merge to one node (shrinking behaviors), or break into a path (expanding behaviors) in the upper mesh.  

\textbf{2.5D domain decomposition}.  A decomposition of $PQ=\cup_i P_iQ_i$ is defined by nonoverlapping 2.5D independent partitions $P_iQ_i$.  Formally, for any two different partitions $P_iQ_i$ and $P_jQ_j$, the intersection is either null or lower-dimensional simplicies like nodes and edges.  The union and intersection of two 2.5D independent partitions $P_iQ_i$ and $P_jQ_j$ are defined by $(P_i\cup P_j)(Q_i\cup Q_j)$ and $(P_i\cap P_j)(Q_i\cap Q_j)$, respectively.  We will further discuss how the 2.5D domain is decomposed into smaller partitions for divide-and-conquer processing in the next section.

\textbf{Triangulation of a 2.5D independent partition}.  A 2.5D independent partition $PQ$ consists of only triangles, edges, and nodes from 2D complexes $P$ and $Q$ and does not bound a 3D volume; one must first transform $PQ$ into a proper polyhedron before triangulating into tetrahedra.  We refer to deriving a polyhedron from a 2.5D independent partition as \emph{lateral triangulation}. To form a polyhedron, because $P$ and $Q$ are already triangles or degenerate triangles, one has to define a surface/triangular mesh that connects 2D boundaries $\partial P$ and $\partial Q$; the lateral surface must contain edges between a node $v\in \partial P$ and its next node $v^\prime\in \partial Q$. However, multiple choices exist to find a lateral triangulation, as two ways exist to triangulate a quadrilateral face on a prism.  The non-uniqueness leads to challenges: (1) the resulting polyhedron may be nontriangulable, as discussed below, and (2) two 2.5D independent partitions should have the same triangulation on their common lateral face, or otherwise one polyhedron will overlap or form a void with its neighboring polyhedron, as addressed in the next section.

\begin{wrapfigure}{LH}{0.74\linewidth}
    \centering
    \includegraphics[width=\linewidth]{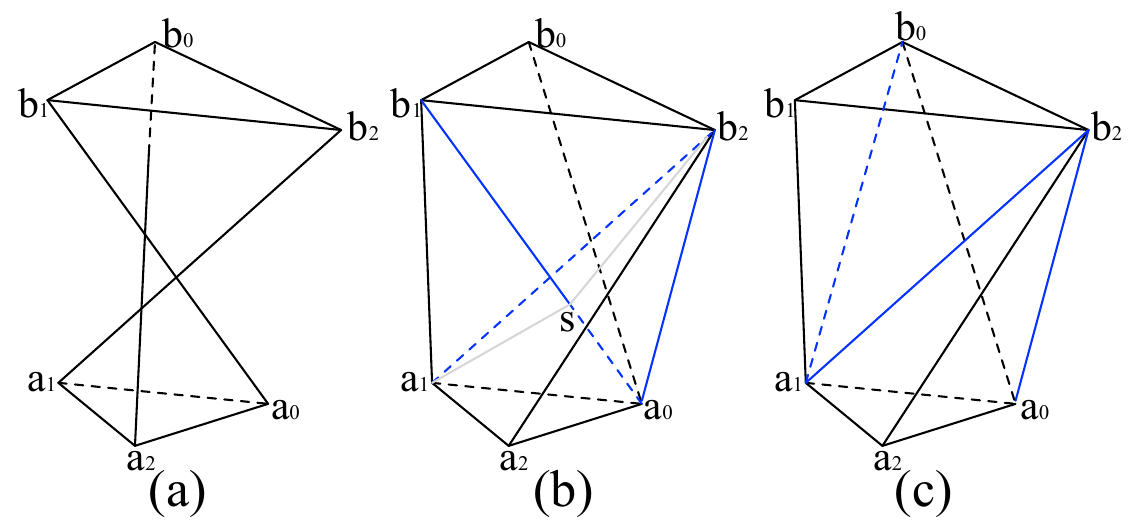}
    \caption{Example of ill-posed and quasi-ill-posed prisms. (a) An ill-posed prism. (b) Staircase triangulation makes the deformed prism nontriangulable without Steiner point $s$, because newly added line segment $a_0b_1$ intersects with newly created triangle $a_1a_2b_2$ at point $s$. Visible faces of the polyhedron include $a_1b_1s$, $b_1b_2s$, $a_1a_2b_2s$, and $a_0a_2b_2$. (c) The deformed prism is successfully triangulated after replacing $a_0b_1$ by $a_1b_0$ to separate 4-node cycle $a_0a_1b_1b_0$.}
    \label{fig:fake_ill}
\end{wrapfigure}

\textbf{Handling of nontriangulable 2.5D independent partition}.  Some partitions cannot find a triangulation, which we call \emph{ill-posed partitions}.  For example, the prism in~\Cref{fig:fake_ill} (a) is twisted so much that any lateral triangulation will lead to self-intersections.  Although the prism is a triangle-to-triangle case, the ill-posed prism does not legitimately enclose a volume. Ill-posed partitions are separate from  \emph{quasi-ill-posed partitions}, which can triangulate with a different lateral triangulation.  We show in Supplemental Materials that a prism has eight possible ways to tessellate three quadrilateral faces; two of them lead to a Sch\"{o}nhardt prism and cannot triangulate, but choosing a different scheme will avoid this situation (\Cref{fig:fake_ill} (b) and (c)). We further discuss the adjustment of lateral triangulation later.

Should an ill-posed partition appear while no alternative way exists to decompose the domain, the space cannot be triangulated. Unfortunately, ill-posed partitions suggest that poloidal discretization is insufficient. In this case, one can refine the poloidal resolution with a smaller gap $\Delta \phi$ so that no ill-posed partitions appear.\footnotemark~In other words, one can upsample the toroidal resolution with field-following interpolation (Eq.~\eqref{eq:ff}) before space meshing.
\footnotetext{In general, the distortion decreases with a finer poloidal resolution $\Delta \phi$; the distortion characterized by displacement $||\Phi(\mathbf{p}, \phi_0, \phi) - \mathbf{p}||$ is bounded by $L\cdot \Delta \phi$, $L$ being the Lipschitz constant of $\mathbf{B}$.} 

\begin{figure*}[!t]
    \centering
    \includegraphics[width=\textwidth]{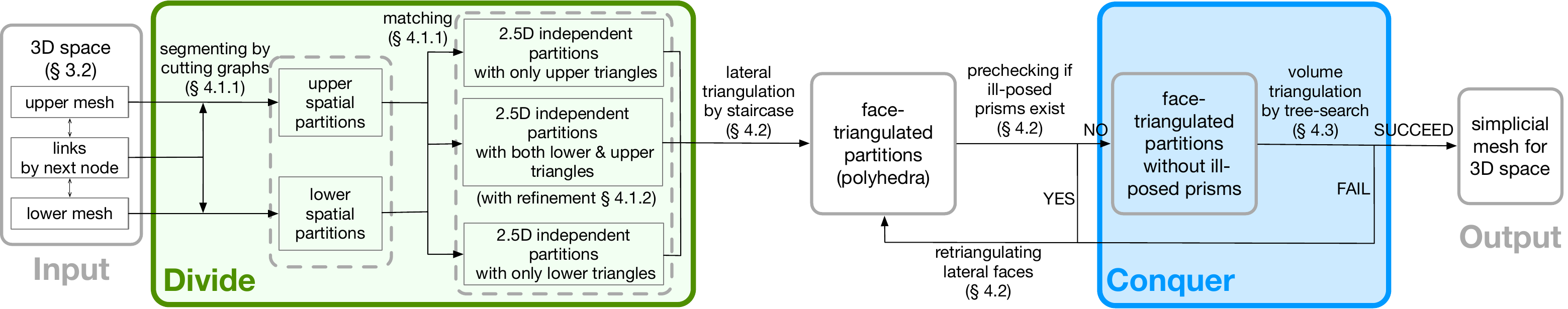}
    \caption{Pipeline of our tetrahedralization algorithm. The algorithm goes through two stages, \textit{divide} and \textit{conquer}. In the divide stage, the input is a deforming 3D space formed by lower mesh, upper mesh, and next node connectivities between them, and the output is a set of 2.5D independent partitions. Then we triangulate their lateral faces. The conquer stage takes every face-triangulated polyhedron as an input. Note that the original 3D space is also a 2.5D independent partition and one can skip the divide stage without complexity consideration. Indivisible prisms are filtered before dividing them, while more complex polyhedra need tentatively dividing first and retriangulating lateral faces if they fail to be divided. The output of the conquer stage, as well as the whole algorithm, is a set of tetrahedra that form a simplicial mesh of the input 3D space.}
    \label{fig:pipeline}
\end{figure*}

\section{Methodology}
\label{sec:method}

\Cref{fig:pipeline} illustrates our workflow to triangulate a deforming 3D space with a \textit{divide-and-conquer} strategy, with the rationale that the domain may be decomposed into 2.5D independent partitions (or simply partitions when there is no ambiguity), each of which could lead to a polyhedron for further triangulation. Multitude challenges exist, including (1) defining a proper domain decomposition that could lead to triangulatable partitions, (2) maintaining compatibility, that is, ensuring no overlaps or voids exist between neighboring partitions, (3) finding a proper lateral triangulation scheme that leads to a possible 3D triangulation, and (4) triangulating a nonconvex polyhedron.  Each challenge is nontrivial and requires careful design.  To these ends, we design our methodology as three major phases with multiple refinements and trial-and-error steps: 
\begin{itemize}\setlength{\parskip}{0pt}
  \item \textbf{Space decomposition} (divide stage): Decomposing the domain into as many partitions as possible based on \emph{closed cutting paths} and inner-node refinements (\Cref{sec:divide}).
  \item \textbf{Lateral triangulation}: For each partition, forming a polyhedron by meshing the surface that connects the partition's lower and upper boundaries.  By default, staircase triangulation rules are applied when possible but subject to change if later volume triangulation is impossible (\Cref{sec:ftp}).
  \item \textbf{Volume triangulation} (conquer stage): For each polyhedron, searching for a possible triangulation scheme with a two-tiered decision-tree algorithm (\Cref{sec:conquer}).
\end{itemize}

\subsection{Space Decomposition}
\label{sec:divide}

With the given lower and upper meshes and node connectivities between them, the first step is partitioning the space into nonoverlapping 2.5D independent partitions.  We aim to decompose the domain into much smaller partitions so that we do not need to triangulate the entire space with our decision-tree algorithm.

We introduce the definition of \emph{cutting edges}, \emph{cutting graphs}, and \emph{cutting paths} and prove that the domain can be decomposed into 2.5D independent partitions (as defined in~\Cref{sec:space_formulation}) as connected components isolated by cutting graphs.

\textbf{Cutting edges}.  Formally, with a partition $PQ$, edges $uv\in P$ and $u^\prime v^\prime$ form a pair of cutting edges if there exists edge $u^\prime v^\prime\in Q$, where $u^\prime$ and $v^\prime$ are next nodes of $u$ and $v$, respectively. As shown in~\Cref{fig:divide}~{(a)}, a pair of cutting edges is a 4-node cycle formed by four edges: a lower-mesh edge $n_1n_9$, an upper-mesh edge $n_1^\prime n_9^\prime$, $n_1n_1^\prime$, and $n_9n_9^\prime$.  
\begin{restatable}{assumption}{myassumption}\label{myassumption}
The lower/upper mesh and toroidal resolution are sufficiently refined such that for any point $x$ on a cutting edge $v_0v_1$ in the lower mesh, the nearest node of $\Phi(x)$ on the upper mesh is either $v_0^\prime$ or $v_1^\prime$.
\end{restatable}
Here, we omit $\phi$ in $\Phi$ for clarity. Conceptually, we assume that a cutting edge reflects distinct transport behaviors of the magnetic field $\mathbf{B}$, which drives the deformation. This is analogous to a laminar flow parallel to the cutting edge, where no particles cross the edge. 

\textbf{Cutting graphs and cutting paths}. A cutting graph is a graph that connects nodes that are connected by cutting edges.  A cutting path is an arbitrary path in the cutting graph.  For example, path $u_0u_1u_2u_3$ is a cutting path if $u_0u_1$, $u_1u_2$, and $u_2u_3$ are cutting edges; the counterpart consisting of next nodes $u_0^\prime u_1^\prime u_2^\prime u_3^\prime$ is also referred to a cutting path in the upper mesh.  

\begin{restatable}{lem}{mylemma}\label{mylemma}
Partition $PQ$ is independent if the boundary $\partial P$ is a closed cutting path.
\end{restatable}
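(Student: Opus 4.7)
The plan is to explicitly construct the upper-mesh region $Q$ from the given cutting path and then verify the two defining conditions of a 2.5D independent partition from \Cref{sec:spacetime_formulation}. First, I write the closed cutting path as $\partial P = u_0 u_1 \cdots u_{k-1} u_0$; by the definition of a cutting edge, each consecutive pair $u_i^\prime u_{i+1}^\prime$ of next nodes is itself an edge of the upper mesh, so the sequence $\gamma^\prime = u_0^\prime u_1^\prime \cdots u_{k-1}^\prime u_0^\prime$ is a closed walk in $S^\prime$. After checking that $\gamma^\prime$ is simple (no repeated vertices), it bounds a simply connected region of the upper mesh; I take this region together with $\gamma^\prime$ as $Q$ with $\partial Q = \gamma^\prime$.

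Condition (2) then follows immediately from the construction: each boundary node $u_i \in \partial P$ maps to the boundary node $u_i^\prime \in \partial Q$, and the interior-of-global-domain clause (if $u_i \notin \partial S$ then $u_i^\prime \notin \partial S^\prime$) follows because the flow map preserves the global interior/boundary split. For condition (1) I would argue by contradiction using \Cref{myassumption} together with continuity of $\Phi$. Suppose some interior node $v$ of $P$ has next node $v^\prime \notin Q$. Then a continuous path inside $P$ from $v$ to $\partial P$, pushed forward under $\Phi$, would have to cross $\gamma^\prime$ somewhere. That crossing lies above some cutting edge $u_i u_{i+1}$, so its preimage sits on that cutting edge; yet the nearest upper-mesh node to the crossing is strictly outside $\{u_i^\prime, u_{i+1}^\prime\}$, contradicting \Cref{myassumption}.

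The hard part will be making the topological step fully rigorous. Two subtleties arise: first, showing that $\gamma^\prime$ is a simple closed curve (a priori it could self-intersect if two cutting edges shared next-node endpoints in reversed order, a scenario I expect \Cref{myassumption} together with local injectivity of $\Phi$ to rule out), and second, formalizing the ``enclosure'' argument so that $\Phi$ maps the interior of $P$ into the bounded component of the upper mesh cut out by $\gamma^\prime$ rather than the unbounded one. Both points reduce to a careful discrete Jordan-curve argument on the upper mesh, combined with \Cref{myassumption} applied edge by edge along $\partial P$. Once enclosure is secured, both defining conditions hold and $PQ$ is a 2.5D independent partition.
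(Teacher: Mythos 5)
Your overall strategy---contradiction from continuity of $\Phi$ plus a topological enclosure argument---is in the same spirit as the paper's, but the decisive step fails as written. You take an interior node $v$ of $P$ with next node $v'\notin Q$, push a path from $v$ to $\partial P$ forward under $\Phi$, observe that the image must cross $\gamma'$, and then assert that the crossing point's ``preimage sits on that cutting edge,'' so that Assumption~\ref{myassumption} is violated. That inference is unjustified: the crossing occurs at the image of an \emph{interior} point of the path, which lies in the interior of $P$, not on $\partial P$. Nothing confines $\Phi^{-1}(\gamma')$ to the cutting path ($\Phi$ is not assumed injective, and even an injective flow map can send interior points of $P$ onto $\gamma'$). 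Since Assumption~\ref{myassumption} constrains only points lying \emph{on} a cutting edge of the lower mesh, it says nothing about that preimage, and the contradiction evaporates. The subtleties you flag (simplicity of $\gamma'$, bounded versus unbounded component) are genuine but are not where the argument actually breaks.

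The paper takes a shorter, purely connectedness-based route: if an inner node $v_k$ had next node $v_k'\notin Q$, then because ``next node'' means nearest upper-mesh node to $\Phi(v_k)$, a sufficiently small disk around $v_k$ maps entirely into the region whose nearest node is $v_k'$; the image of $P$ would then consist of one piece near $v_k'$ and a separate piece meeting $Q$, i.e., it would be disconnected, contradicting that the continuous image of the connected set $P$ is connected (Rudin, Thm.~4.22). If you want to salvage your Jordan-curve version, you must apply Assumption~\ref{myassumption} where it actually holds---along $\partial P$ itself---to pin $\Phi(\partial P)$ inside the nearest-node cells of $\gamma'$, and only then reason about which side $\Phi(\mathrm{int}\,P)$ can occupy; that is substantially more machinery than the paper's one-line connectedness appeal. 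Your explicit construction of $Q$ and check of condition (2) are reasonable additions, though the claim that $\Phi$ sends $\partial S$ to $\partial S'$ is itself an unstated assumption.
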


\begin{proof}
In a trivial case, if all nodes on $P$ are on boundary $\partial P$, $PQ$ is independent because every node on $P$ maps to $Q$.  Now, assume $v_k$ is an inner node of $P$ such that $v_k\notin\partial P$ and the next node is $v_k^\prime\notin Q$.  Because the next node is the closest point to $\Phi(v_k)$, one can find a sufficiently small disk $B(v_k, \epsilon), \epsilon>0$ so that the deformation $\Phi$ of any point on the disk has the closest point $v_k^\prime$.  Now, the domain $P$ maps to at least two connected components by $\Phi$; one is the image of the disk, and another is the region that intersects $Q$.  Having two connected components contradicts Rudin's Theorem 4.22~\cite{rudin}; the continuous mapping $\Phi$ of $P$ shall be connected. This terminates the proof.
\end{proof}

The assumption and lemma make it possible to decompose the domain via connected component labeling.  Specifically, cutting graphs break the lower and upper meshes into nonoverlapping 2.5D independent partitions for further processing. We use a union-find implementation to join edge-sharing triangles in each component and then match the components between the lower and upper meshes.  Special treatment may be needed for degenerate cases on domain boundaries; for example, if the cutting path in the upper mesh is on the boundary, a connected component in the lower mesh may not find a matching component in the upper mesh. In this case the resulting partition is degenerate, and we assign a path as the partition's upper component, as illustrated in~\Cref{fig:divide}.  

Decomposing the domain by cutting graphs provides several benefits. First, a closed cutting path helps isolate trivial scenarios such as deformed prisms  (i.e., cells with triangle-to-triangle scenario in~\Cref{sec:space_formulation}) from more complex scenarios, because all quadrilaterals in deformed prisms are 4-node cycles formed by two cutting edges. 
Second, the triangulation of a 4-node cycle of a cutting edge can be directly given by staircase triangulation as introduced in~\Cref{sec:staircase}, which makes it easy to coordinate cross sections. 
Third, each partition resulting from our decomposition could be processed independently, according to Lemma 1.

\begin{figure}[!t]
    \centering
    \includegraphics[width=\columnwidth]{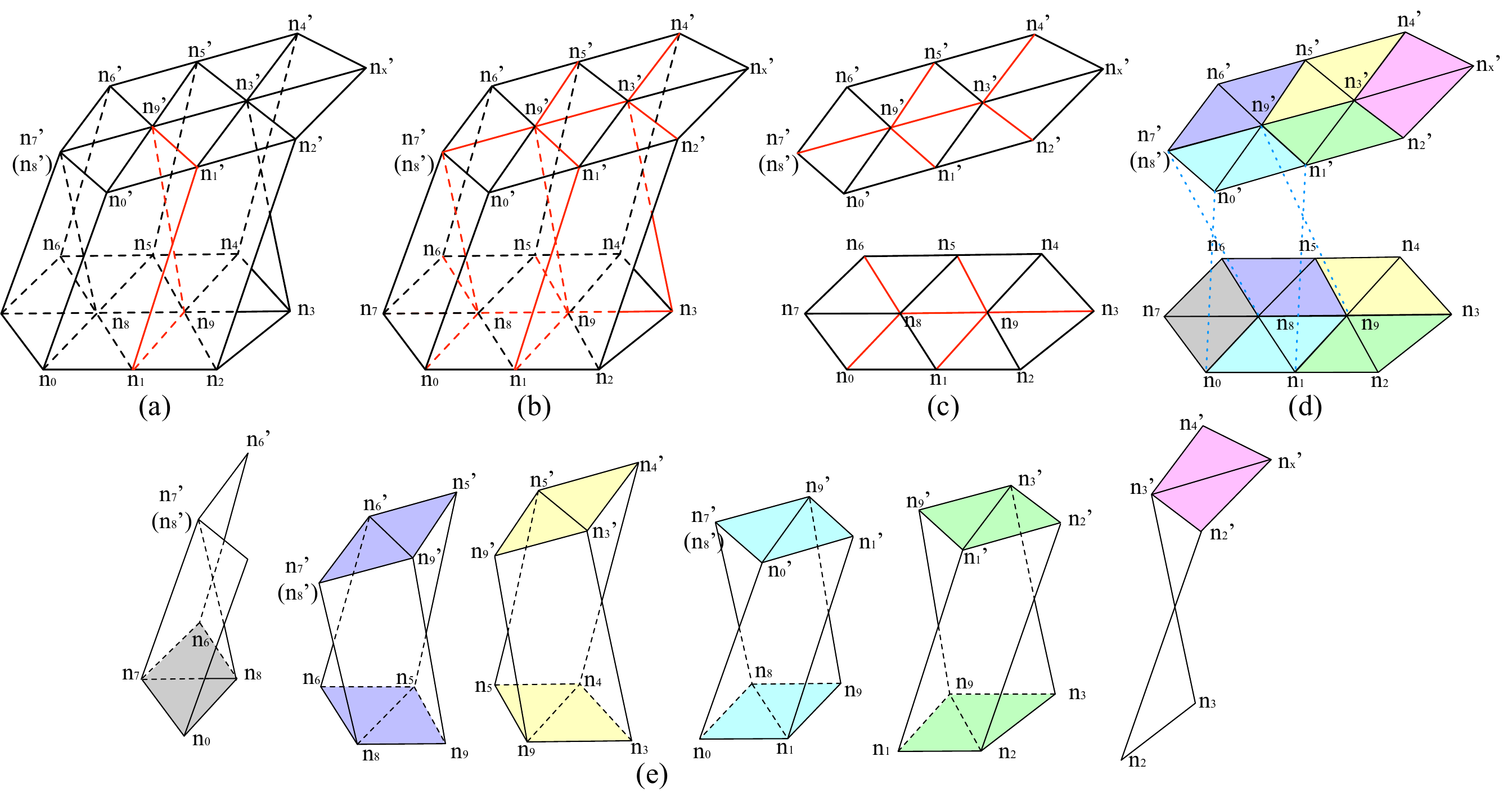}
    \caption{Mini-example showing the process of segmenting space by cutting paths in the divide stage. (a) 4-node cycle formed by two cutting edges highlighted in red in the space. The cycle may or may not be a plane. (b) All the cutting paths in the space. (c) Segmenting lower and upper meshes into spatial partitions by cutting paths. (d) Matching lower and upper spatial partitions by node-to-next-node correspondences. Only correspondences between the blue partitions are shown. Some partitions are matched with degenerate spatial partitions (e.g., edges or nodes), such as the gray partition in the lower mesh and the pink partition in the upper mesh. (e) After matching the spatial partitions, we add node-to-next-node correspondences back and get 2.5D independent partitions.}
    \label{fig:divide}
\end{figure}

\begin{figure}[!t]
    \centering
    \includegraphics[width=\columnwidth]{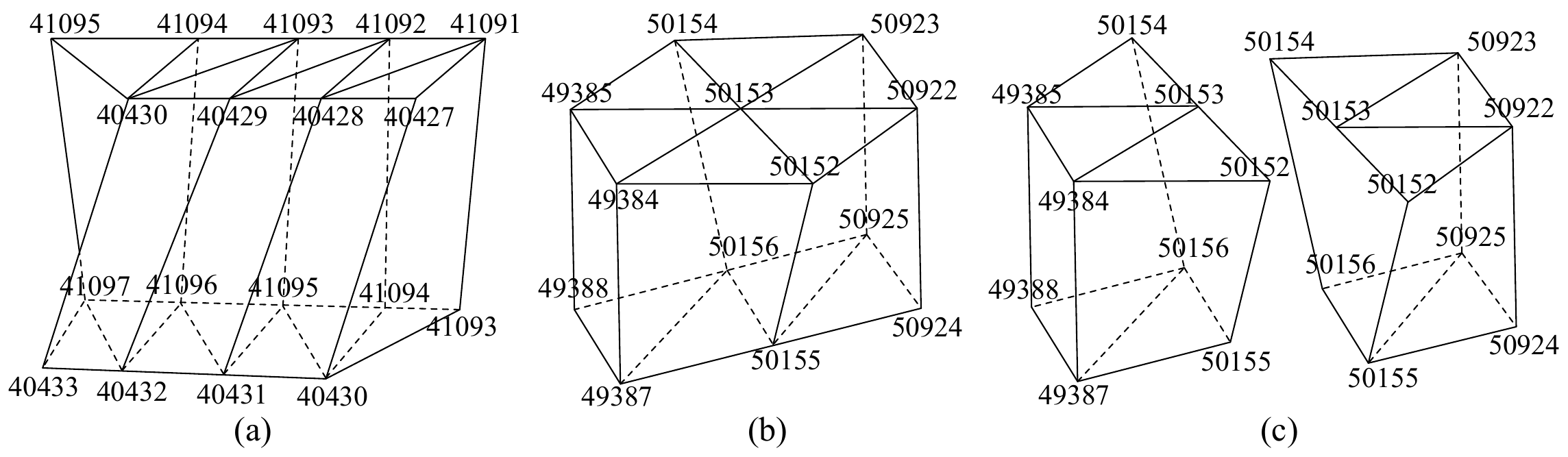}
    \caption{(a) and (b) show two 2.5D independent partitions derived from ITER data. The indices of mesh nodes in all figures agree with those in ITER data. (a) has no inner nodes in upper mesh, while (b) has one inner node 50153 on upper mesh. (c) shows two partitions of (b) derived by optional refinement in \textit{divide} step.
    }
    \label{fig:indSys}
\end{figure}

\textbf{Refining a Partition by Inner Nodes}. Readers may skip this step if the 2.5D independent partitions are small enough to reduce complexity. Some 2.5D independent partitions isolated by cutting graphs could still include a large number of triangles or even inner nodes in the lower or upper mesh. To derive smaller partitions and further reduce complexity in the latter processes, we decompose every 2.5D independent partition with inner node(s). These partitions are split by \textit{penta-faces} that include an edge in the lower mesh, two edges between two nodes and their next nodes, and a path with two edges in the upper mesh passing the inner node (e.g., $50155-50156-50154-50153-50152$ in \Cref{fig:indSys}~{(b)}). \Cref{fig:indSys}~{(c)} shows the split result of the partition in \Cref{fig:indSys}~{(b)}. There might be more than one penta-face crossing one inner node; we use the one that can balance the number of cells in both the lower and upper meshes in two split partitions. Note that one can further refine a 2.5D independent partition with multiple adjacent inner nodes by multinode cycles, not only by penta-faces with 5-node cycles. Splitting by multinode cycles yields smaller partitions but makes triangulation and compatibility on cross sections more complex.

\subsection{Lateral Triangulation}
\label{sec:ftp}

\begin{wrapfigure}{LH}{0.45\linewidth}
    \centering
    \includegraphics[width=\linewidth]{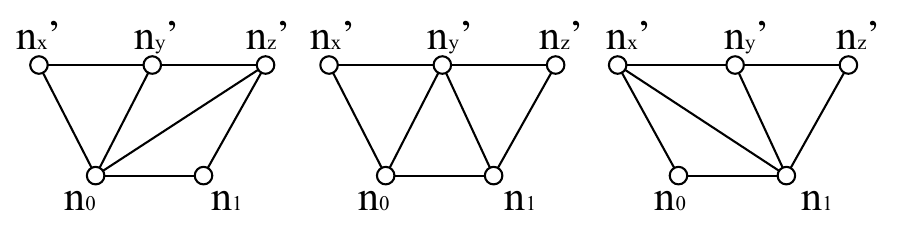}
    \caption{All possible lateral triangulations for a penta-face. Each of them can be used as long as the subdivisions of two partitions containing the same penta-face agree on this common penta-face. Cutting by faces with more nodes in a cycle is feasible but incurs more complexity of subdividing and coordinating cross sections. For a convex polygon with $n_p$ nodes, there are $\frac{1}{n_p-1}{2n_p-4 \choose n_p-2}$  triangulations for it~\cite{de2010triangulations}.}
    \label{fig:penta-face}
\end{wrapfigure}

\label{sec:face_triangulation}

Before triangulating a 2.5D independent partition, one must first triangulate its lateral faces to form a polyhedron, namely, a \textit{face-triangulated partition}.  Note that components from the lower/upper mesh are already simplicial; thus, we need to consider only lateral faces.  The lateral faces created in the divide stage include 4-node and 5-node cycles.  There are two subdivisions for 4-node cycles and three subdivisions for 5-node cycles (shown in~\Cref{fig:penta-face}).  The subdivision of a lateral face can be arbitrary if the subdivisions of two adjacent partitions agree on it.  As stated in~\Cref{sec:staircase}, we follow the staircase triangulation rule---choosing the splitting edge with ascending node indices---to coordinate the separation of cross sections caused by cutting edges in the divide stage.  However, triangulation in deforming space has more problems and possibly needs more than one iteration.

\textbf{Lateral triangulation of deformed prisms}. As we discussed in~\Cref{sec:preliminaries}, not all deformed prisms can be triangulated, which we refer to as \textit{ill-posed prisms} in this paper.  Take a step back to consider a regular prism with eight potential ways to triangulate its lateral quadrilaterals, as illustrated in~\Cref{sec:staircase}; six triangulation schemes can successfully split the prism into three tetrahedra, except for the two schemes that add only opposite diagonals.  Note that staircase triangulation is a particular case of the six successful schemes. Now consider a deformed prism; not all of the six separations can triangulate a deformed prism.  Some deformed prisms (called \textit{quasi-ill-posed prisms}) 
fail to triangulate by the staircase triangulation but are successfully triangulated by another triangulation scheme (see \Cref{fig:fake_ill} (b) and (c) as an example). Unfortunately, ill-posed prisms cannot be triangulated by any scheme, as shown in Supplemental Materials.

We follow three steps to triangulate the lateral faces of a deformed prism.   
First, we check whether staircase triangulation could lead to a valid triangulation.  
If not, second, the deformed prism is at least a quasi-ill-posed prism, which requires checking all separation schemes to see whether one of the separations can triangulate the deformed prism. Third, if all separations cannot triangulate the deformed prism, it is an ill-posed prism. It also means that the toroidal resolution of data should be increased to reduce the deformation between two successive poloidal planes, as we previously discussed in~\Cref{sec:space_formulation}. See Supplemental Materials for more details about conditions to check.

\textbf{Lateral triangulation of complex partitions beyond prisms}. We choose an arbitrary sequence of separations of the 2.5D independent partition's lateral faces that are not constrained by triangulation of quasi-ill-posed prisms, and we try to triangulate it. If we fail to triangulate the 2.5D independent partition with given triangulated faces, then we change the separation of one lateral face and try to triangulate the newly created polyhedron, as shown in~\Cref{fig:pipeline}. Different from deformed prisms, it is hard to determine whether a polyhedron is triangulable without trying to triangulate it first.

\subsection{Volume Triangulation}
\label{sec:conquer}

Once each 2.5D independent partition is face-triangulated into a polyhedron, we further triangulate each polyhedron into tetrahedra.  We propose a \textit{node elimination} algorithm that removes nodes one by one as well as edges incident on it as tetrahedra until all nodes are removed.  In searching for the sequence of nodes to be removed, there are multiple decisions to make, and we need to trace back to previous steps if we meet with nontriangulable remaining polyhedron. Thus, we formulate the node elimination algorithm into a decision-tree paradigm.

\textbf{Two-tiered decision tree for volume triangulation}. The decision tree makes two types of decisions alternately in levels with odd and even numbers (odd and even levels). In odd levels, the tree makes a decision on which node to eliminate (so-called \textit{pivot node}); in even levels, the tree chooses one way to triangulate the polyhedron formed by pivot node and edges incident on the node (so-called \textit{isolated polyhedron}). Every tree node contains information about separated tetrahedra, remaining polyhedron, pivot node, and unvisited children of the tree node.

\begin{wrapfigure}{LH}{0.6\linewidth}
    \centering
    \includegraphics[width=\linewidth]{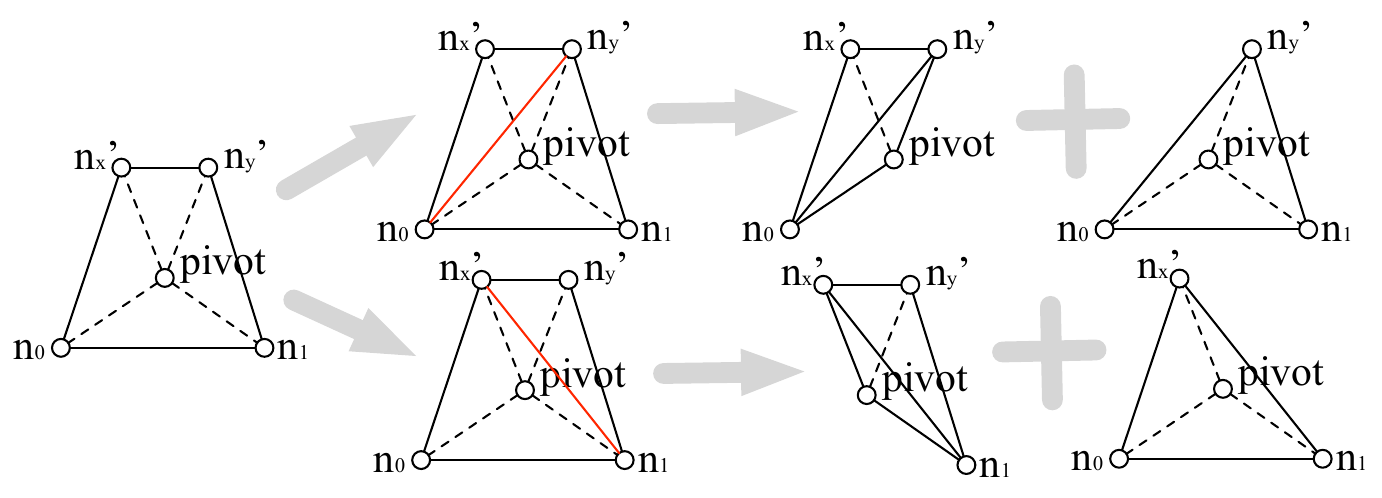}
    \caption{One example of isolated polyhedron consisting of a pivot node on lower mesh, its neighbors, and their connectivities (left), two possible triangulations (middle), and separated tetrahedra (right).}
    \label{fig:pivot_cases}
\end{wrapfigure}

\textbf{Odd levels: decision on choosing a pivot node}. The pivot node can be any node in the polyhedron, and one can introduce any heuristic to choose pivot node. Here, we prioritize all mesh nodes with minimum degree in the current remaining polyhedron as pivot nodes and create a tree node for every pivot node as child for current tree node. One advantage of this heuristic is that nodes with minimum degree are incident to fewer edges and thus incur simpler isolated polyhedra. See~\Cref{alg:pivot_choose} for pseudo-code of this step.

\textbf{Even levels: decision on triangulating polyhedron around the pivot node}. We continue with a pivot node chosen in the last level and eliminate the pivot. We first isolate the pivot, its neighbors, and edges between the pivot and its neighbors. All the isolated elements form an isolated polyhedron that is a subset of the inputted polyhedron. Then we triangulate the isolated polyhedron by exhausting all possibilities for an isolated polyhedron, remove derived tetrahedra from the current polyhedron, and update the remaining polyhedron. All possible tetrahedralizations of an example isolated polyhedron are shown in~\Cref{fig:pivot_cases}. The isolated polyhedron typically contains 4–5 nodes. A complete list of isolated polyhedra with no more than 5 nodes, along with their corresponding tetrahedralizations, is provided in the Supplemental Materials. See~\Cref{alg:pivot_remove} for the pseudo-code of this step.

Note that not all the tetrahedralization ways in the complete list of~\Cref{fig:pivot_cases} are feasible for a polyhedron; two conditions need checking: (1) if every newly created link is inside of the polyhedron (constraints introduced by nonconvexity) and (2) if any of separated tetrahedra has four coplanar mesh nodes. We create a tree node for every feasible tetrahedralization as a child for the tree node representing the pivot node in the upper level, and we update separated tetrahedra and remaining polyhedron for these children. In the next iteration, we repeat operations in the two levels for the remaining polyhedron while the remaining polyhedron is non-empty. \Cref{fig:node-elimination} shows the node elimination algorithm in the structure of the decision tree, taking an arbitrary polyhedron with triangulated lateral faces as an example. See~\Cref{alg:pivot_remove} for the pseudo-code of the whole node elimination algorithm.

\begin{figure}[!ht]
    \centering
    \includegraphics[width=0.458\textwidth]{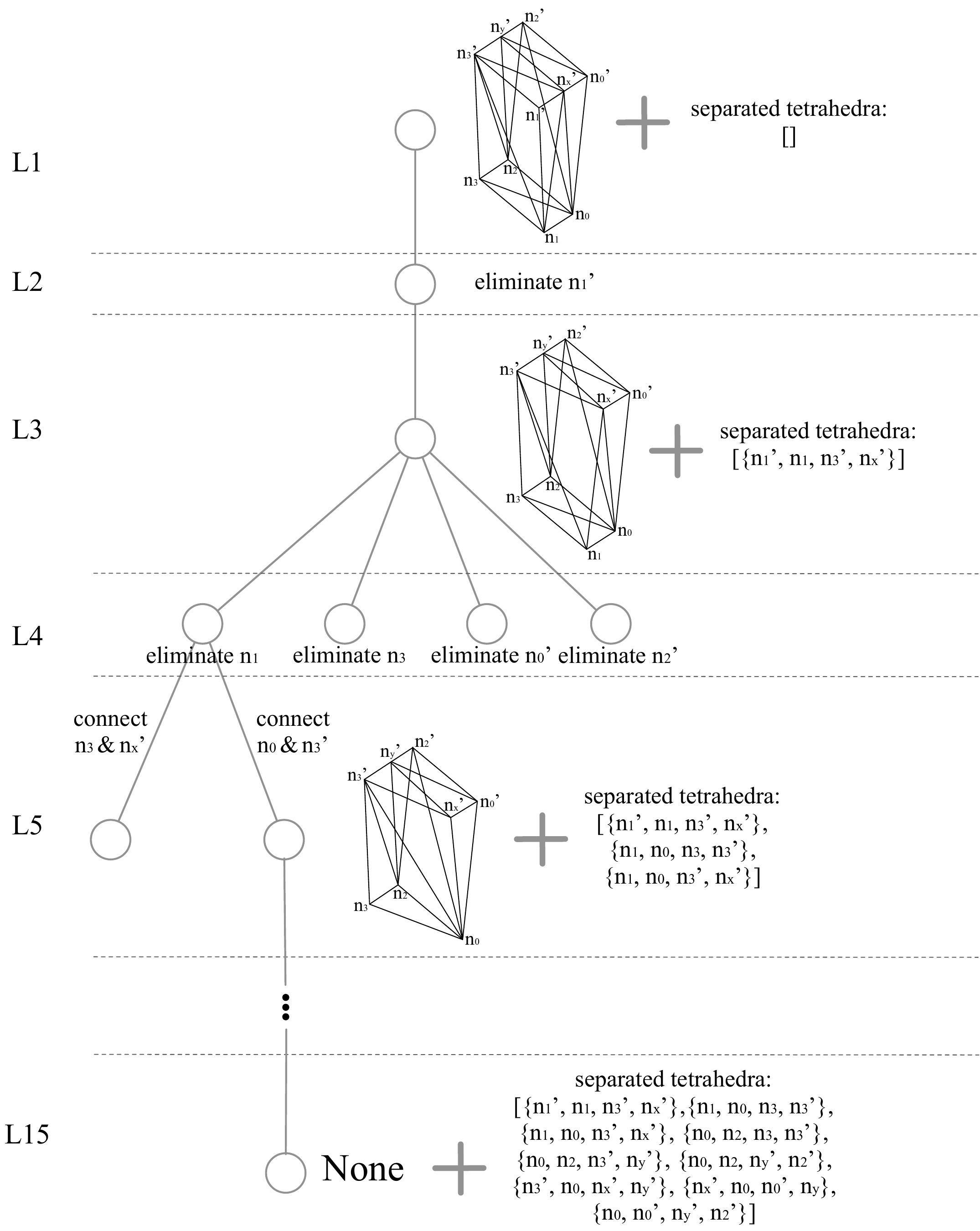}
    \caption{Node elimination algorithm in a decision-tree paradigm. In odd levels, it makes a decision on ``which mesh node to eliminate'' among mesh nodes with minimum degree; in even levels, it makes a decision on ``how to eliminate'' based on the isolated polyhedron formed by the pivot node and its neighbors. See the full tree in Supplemental Materials.
    }
    \label{fig:node-elimination}
\end{figure}

\textbf{Complexity of the algorithm}. The best and most common case is to terminate with a single branch ($O(n)$); the worst and nearly impossible case is to fully expand the tree ($O(n2^n)$). This tree has $2(n_v-3)+1$ levels ($n_v$ stands for the number of nodes in the face-triangulated polyhedron to be triangulated) because nearly every node (except the last three nodes that can be directly removed with the last tetrahedron) will be a pivot exactly once in any branch that connecting root and a leaf, and every pivot generates two levels. To find a tetrahedralization way, there is no need to traverse all possible tree node. The algorithm terminates when it finds a tree node whose remaining polyhedron is empty. When it meets with an indivisible isolated polyhedron (e.g., \Cref{fig:staircase}~{(a)}) around a pivot node, it traces back and continues with the next unvisited child of the current tree node's parent.

\textbf{Completeness of the algorithm}. The algorithm is guaranteed to find a volume triangulation that allows only connections between nodes and their 2-hop neighbors if there is one, because this decision tree covers all possible connections within 2-hop neighbors. If there are no unvisited nodes in the tree, the algorithm terminates with no solution, which means that the inputted polyhedron is indivisible if only links between 2-hop neighbors are allowed to be created. It is reasonable to consider links between 2-hop neighbors because links between greater than n-hop neighbors degrade the accuracy of interpolation in tetrahedra.

\textbf{Optimality of the algorithm.} If cost of a solution in the algorithm is defined to be the number of visited nodes when finding a feasible volume triangulation, then our heuristics for choosing pivot may not direct the algorithm to an optimal solution. Choosing mesh nodes with minimum degree as a pivot may not be the most complexity-saving heuristics: removing a mesh node with minimum degree might introduce nonconvexity to the remaining polyhedron while removing mesh nodes with higher degrees could maintain convexity. Nonconvexity brings more constraints on feasible connections that one can create and raises the risk of making the remaining polyhedron indivisible, which requires tracing-back in the decision tree and thus increases cost.

\begin{algorithm}[!ht]
\small
\caption{Node Elimination Algorithm}\label{alg:node_elimination}
\KwData{a face-triangulated polyhedron}
\KwResult{tetrahedralization result for the polyhedron}
root $\gets$ Node(\textit{separated\_tets} = $\emptyset$, \textit{remaining\_volume} = volume,
 \textit{pivot} = \texttt{None}, \textit{unvisited\_children} = $\emptyset$)\;
tree $\gets$ Tree(root=root)\;
current\_node $\gets$\ root\;
divisible $\gets$ \texttt{True}\;
\While{current\_node.remaining\_volume$\neq\emptyset$}{
\If{divisible == \texttt{False}}{
    current\_node $\gets$ the next unvisited child of current\_node's parent\;
    \If{no unvisited nodes in the tree}{ \textit{Throw}(``This polyhedron is indivisible.'')
}
\eIf{current\_node.current\_node.pivot==\texttt{None}}{
    current\_node, tree, divisible $\gets$ \textit{PivotChoose}(current\_node, tree)\;
    }{
    current\_node, tree, divisible $\gets$ \textit{PivotRemove}(current\_node, tree)\;
    }
}}
\textbf{return} current\_node.\textit{separated\_tets}
\end{algorithm}

\begin{algorithm}[!ht]
\small
\caption{PivotChoose}\label{alg:pivot_choose}
\KwData{tree\_node, tree}
find mesh nodes with minimum degree\;
generate children representing pivot nodes for tree\_node\;
divisible $\gets$ \texttt{False} \;
\While{divisible==\texttt{False}}{
\eIf{tree\_node.unvisited\_children$\neq\emptyset$}{
current\_node $\gets$ tree\_node.\textit{unvisited\_children}[0]\;
remove current\_node from tree\_node.\textit{unvisited\_children}\;
current\_node, tree, divisible $\gets$ \textit{PivotRemove}(current\_node, tree)\;
}{
\textbf{return} current\_node, tree, \texttt{False}\;
}
\textbf{return} current\_node, tree, \texttt{True}\;
}
\end{algorithm}

\begin{algorithm}[!ht]
\small
\caption{PivotRemove}\label{alg:pivot_remove}
\KwData{tree\_node, tree}
check the cycles formed by pivot and its neighbors; determine its type\;
divisible $\gets$ if at least one divisible way exists\;
\eIf{divisible}{
generate children representing dividing ways for tree\_node\;
current\_node $\gets$ the first unvisited child of tree\_node\;
remove current\_node from tree\_node.\textit{unvisited\_children}\;
update current\_node.\textit{separated\_tets} and current\_node.\textit{remaining\_volume}\;
return current\_node, tree, \texttt{True}
}{
return current\_node, tree, \texttt{False}
}

\end{algorithm}

\section{Results and Evaluation}
\label{sec:case-studies}

We evaluate our method by XGC simulations that discretize the domains of ITER and W7-X with 16 and 1,280 polodial planes, respectively. Specifically, for W7-X data, 256 polodial planes repeat five times over $\phi\in[0,2\pi)$. Each triangular mesh contains $O(10^5-10^6)$ nodes and triangles. The triangular mesh is identical across all poloidal planes in ITER, while W7-X features different meshes on the 256 poloidal planes. All experiments were conducted on a machine with an Apple M1 chip and 64 GB of system memory, without parallel technology.

\subsection{Performance Evaluation of Our Method}

We compare our method with TetGen~\cite{hang2015tetgen}, a general-purpose tool for generating tetrahedral meshes from a 3D input domain, with results shown in~\Cref{tab:tetgen}. The difference in purpose and algorithm between our method and TetGen is discussed in~\Cref{sec:discussion}. TetGen only accepts constraints defined by cycles (e.g., triangles), so we specify the next node connectivity constraints between adjacent meshes using the cycles formed by pairs of cutting edges (\Cref{sec:divide}). Our method generates 339,626 tetrahedra with $56,980\times2=113,960$ nodes for ITER data, and $1,117,419$ to $1,121,074$ tetrahedra with $187,561\times 2=375,122$ nodes for every two adjacent poloidal planes in W7-X. TetGen is unable to generate a valid tetrahedralization that satisfies the constraints imposed by the two input meshes and the next-node connectivity for ITER and W7-X data. To enable a fair comparison, we generate synthetic data, with details provided in the Supplementary Materials. Across all datasets, our method is at least 10 times faster than TetGen. Notably, our approach avoids the introduction of unnecessary Steiner points, which reduces the time required to compute scalar values at these points through magnetic line-following interpolation.

\begin{table}[!th]
    \centering
    \caption{Performance comparison between our method (based on C++ implementation) and TetGen. For W7-X, we aggregate the number of nodes, triangles, and timings for 256 pairs of adjacent poloidal planes. For ITER and synthetic data with identical mesh across poloidal planes, results are shown for a single pair of poloidal plans.}
    \small
    \begin{tabular}{p{0.6cm}|c|c|c|c|p{0.2cm}|c}
    \toprule
         Dataset & \# Nodes & \# Tris & Method & Time (sec) & Suc. & \# Steiner \\ \midrule
         \multirow{2}*{ITER} & \multirow{2}*{56,980} & \multirow{2}*{112,655} & ours & 0.45659 & Yes & 0 \\ \cline{4-7}
         & & & TetGen & 5.10149 & No & - \\ \hline
         \multirow{2}*{W7-X} & \multirow{2}*{48,015,616} & \multirow{2}*{95,390,720}  & ours & 371.226 & Yes & 0 \\ \cline{4-7}
         & & & TetGen & 6179.84 & No & - \\ \hline
         \multirow{2}*{\makecell{Syn.\\data}} & \multirow{2}*{1,000} & \multirow{2}*{1,903} & ours & 0.00584 & Yes & 0 \\ \cline{4-7}
         & & & TetGen & 0.35879 & Yes & 65 \\
    \bottomrule
    \end{tabular}
    \label{tab:tetgen}
\end{table}

\subsection{Evaluation on Visualization Tasks}

\begin{figure}
    \centering
    \includegraphics[width=0.76\columnwidth]{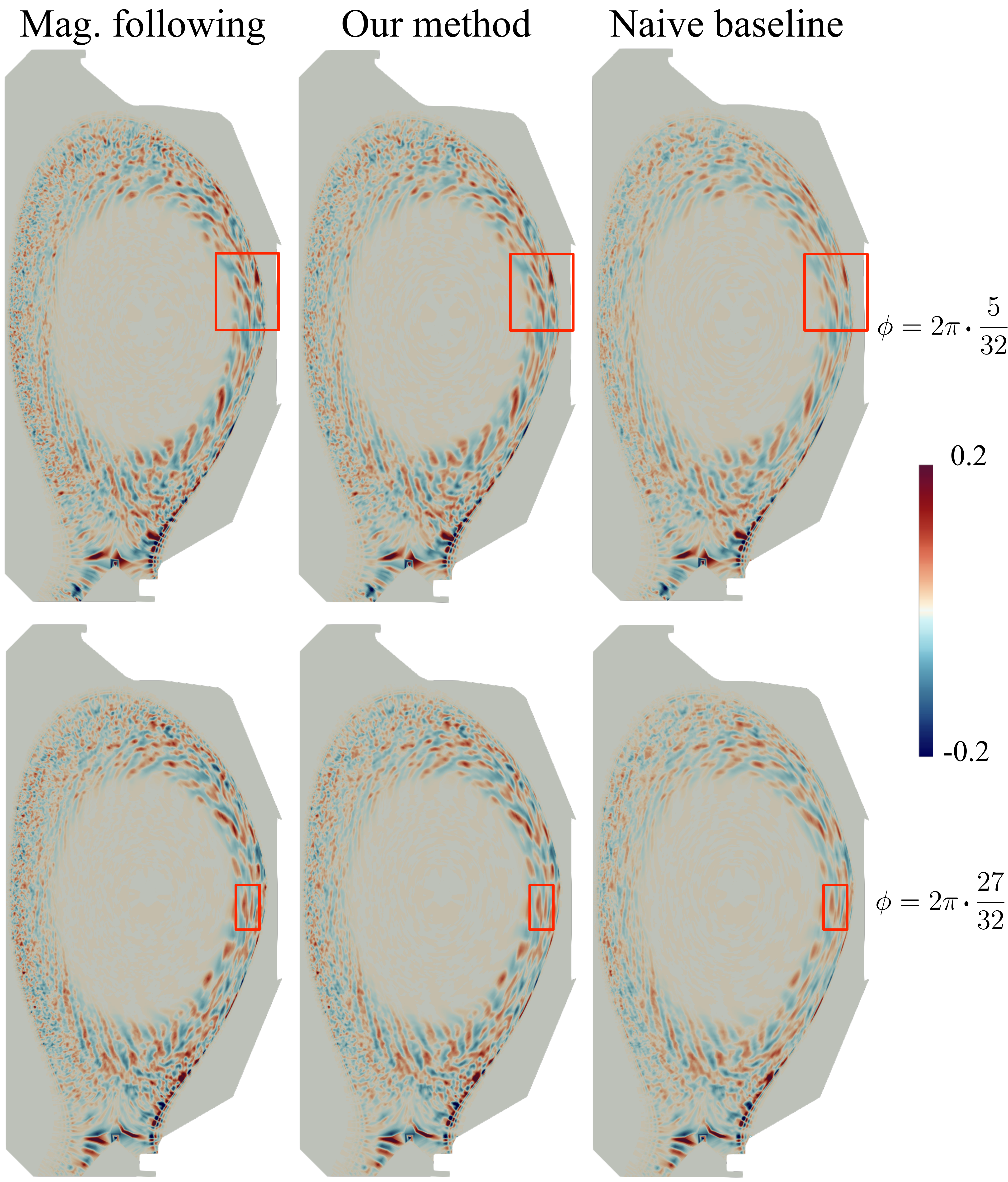}
    \caption{Upsampling results of magnetic-following interpolation, our method, and a naive baseline. The overall patterns shown by different interpolation methods are similar. However, our method has closer values to magnetic-following method while the naive baseline has a smaller range of values compared with the other two. In several local regions (such as those marked by red boxes), the naive method even shows shapes different from those of the other two.}
    \label{fig:unsampling}
\end{figure}

\begin{table}
    \centering
    \caption{Performance comparison of three methods on ITER data, based on Python implementations. Magnetic-following interpolation results are taken as ground truth. The two $\phi$'s are selected because they show the best and worst approximation quality in the middle of two poloidal planes.
    }
    \small
    \begin{tabular}{c|c|c|c|c}
        \toprule
        $\phi$ & Interp & MSE & PSNR (dB) & Time (sec) \\ \midrule
        \multirow{3}*{$2\pi\times\frac{5}{32}$} & Magnetic following & - & - & 183,704 \\ \cline{2-5}
        & Ours & 2.714e-5 & 48.569 & 13.280 \\ \cline{2-5}
        & Naive & 2.632e-4 & 38.704 & 10.398 \\ \hline
        \multirow{3}*{$2\pi\times\frac{27}{32}$} & Magnetic following & - & - & 183,703 \\ \cline{2-5}
        & Ours & 2.544e-5 & 38.304 & 13.283 \\\cline{2-5}
        & Naive & 2.601e-4 & 28.208 & 10.398 \\ \bottomrule
    \end{tabular}
    \label{tab:performance}
\end{table}

We demonstrate and evaluate the use of our methodology with ITER data. Quantitative evaluation of W7-X is omitted here because they have different meshes on poloidal planes, which makes quantitative evaluation quite challenging.

\begin{figure}[!bh]
    \centering
    \includegraphics[width=\columnwidth]{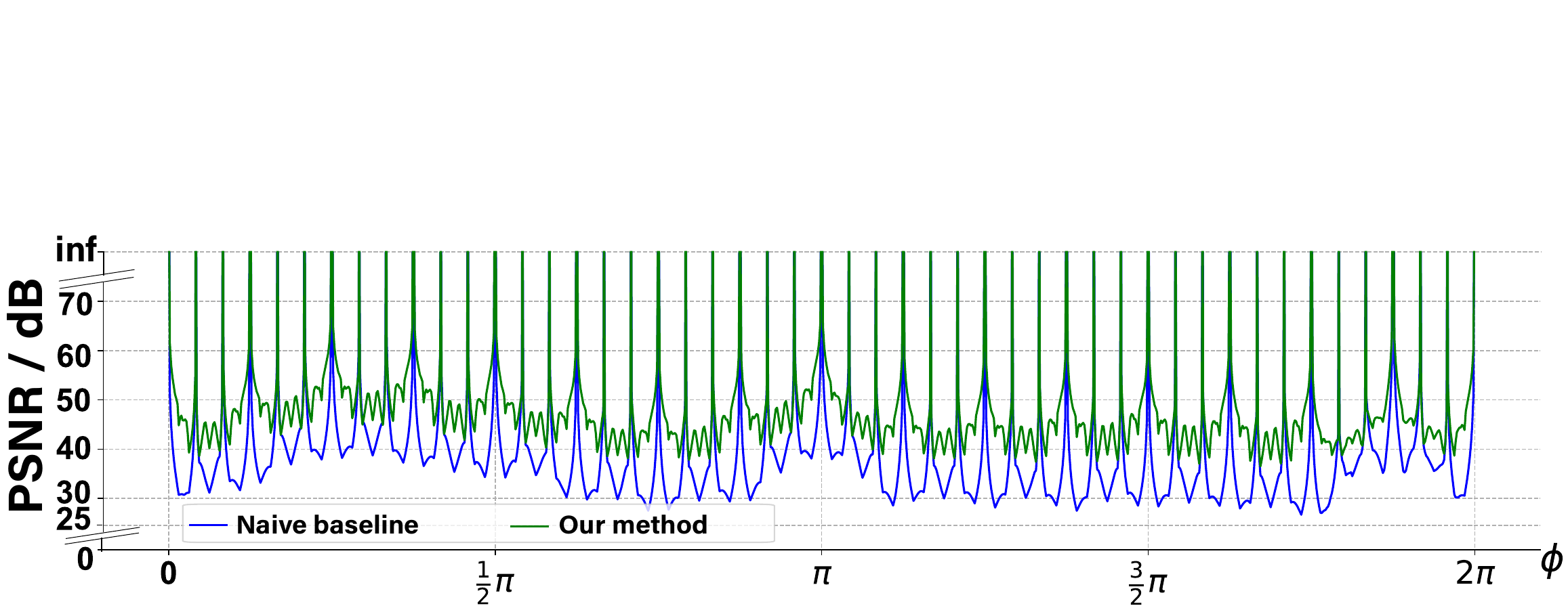}
    \caption{$\phi$ vs. PSNR for our method and naive baseline on ITER data taking magnetic-following interpolation as ground truth. Both curves have 48 infinity values because we upsampled the toroidal coordinate from 16 into 48 poloidal planes to avoid ill-posed prisms, and all three methods do the same interpolation on poloidal planes. The local minima are always attained at the middle of two poloidal planes for both our method and naive baseline, while the PSNR of our method is usually greater than 40~{dB} while that of the naive baseline drops below 30 dB sometimes.}
    \label{fig:phi_vs_psnr}
\end{figure}

\begin{figure}
    \centering
    \includegraphics[width=\columnwidth]{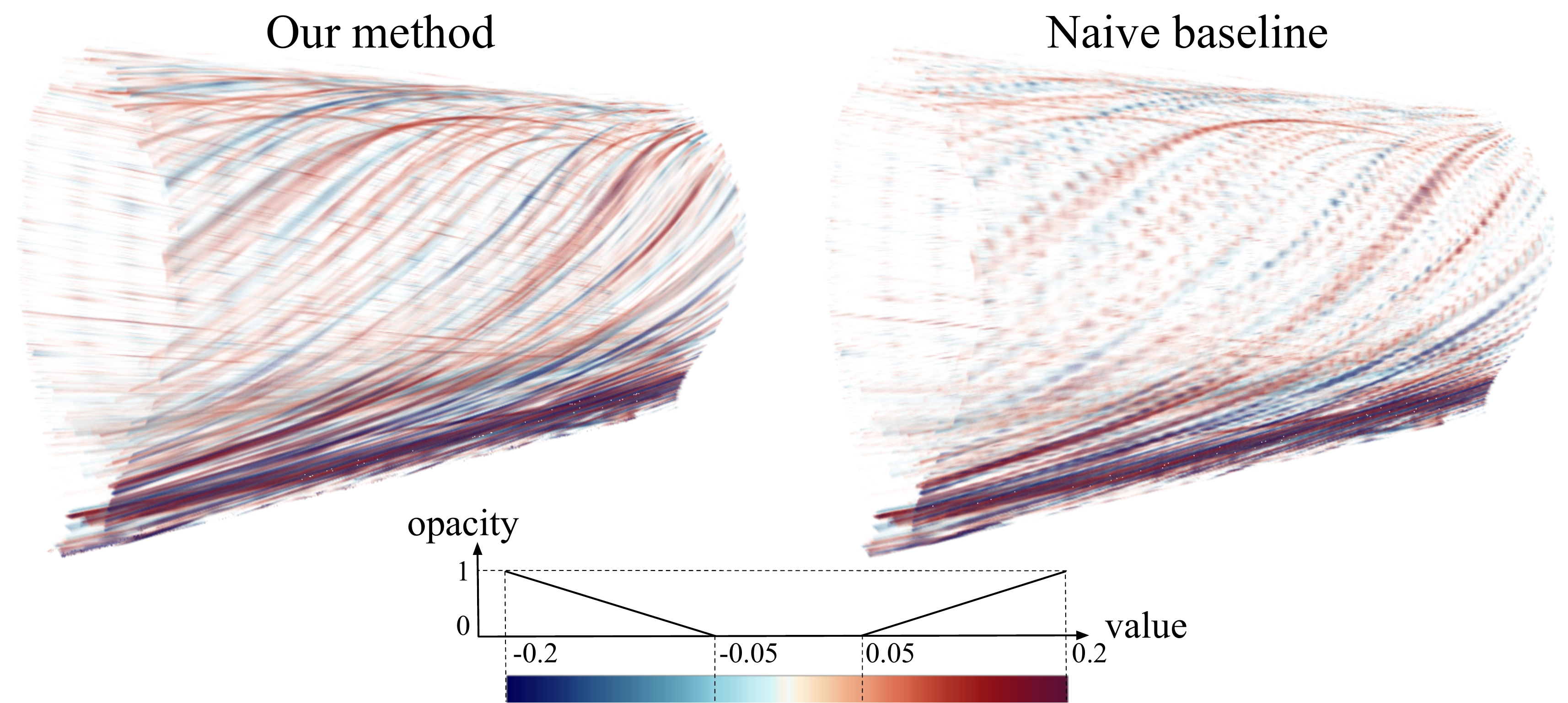}
    \caption{Volume rendering result of $f$. Transfer function is shown on the bottom. Our shows continuous trajectories of value while the naive baseline breaks the features into dashed lines, because it builds meaningless connectivities between two consecutive steps.}
    \label{fig:vol_ren}
\end{figure}

\begin{figure}
    \centering
    \includegraphics[width=\columnwidth]{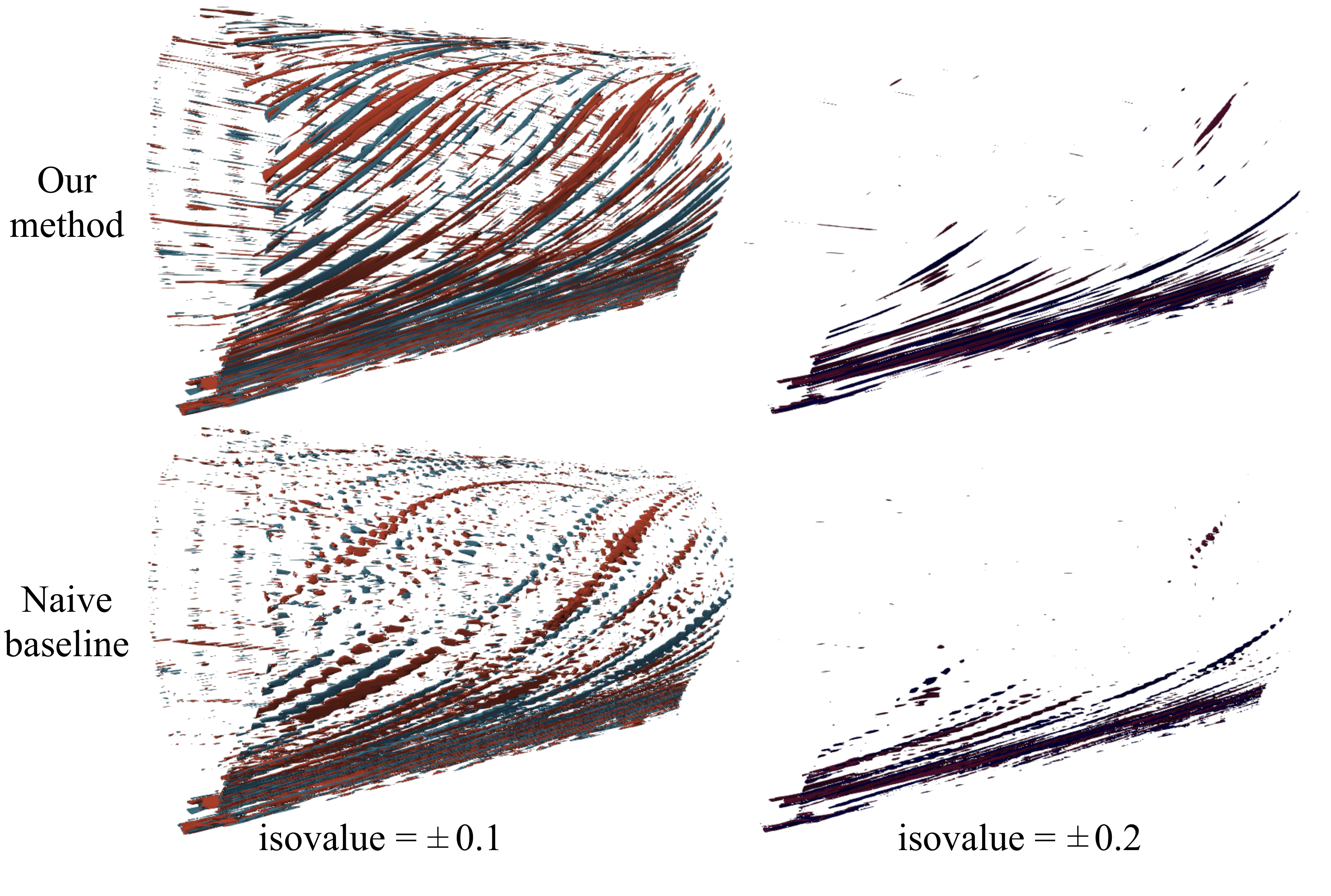}
    \caption{Isosurfaces with isovalues $\pm 0.1$ and $\pm 0.2$. Red lines show isosurfaces with positive isovalues, while blue lines show isosurfaces with negative isovalues. Well-chosen values can show more interesting blobs, which is beyond the scope of this work.
    }
    \label{fig:isosurfacing}
\end{figure}

\textbf{Evaluation scheme}. We define and use two types of approximations: \textit{forward approximation}, i.e., interpolation of function values $f$ at given 3D locations, such as volume rendering and particle tracing, which require sampling at arbitrary locations for ray/path integration, and \textit{inverse approximation}, i.e., root-finding of 3D locations with given function values, including isosurfacing in scalar-valued functions and critical-point-finding in vector-valued functions. We compare three types of interpolation schemes: (i) magnetic-following interpolation defined in Eq.~\eqref{eq:ff} that requires magnetic-line tracing, (ii) piecewise linear interpolation based on our simplicial meshing scheme that models deforming space, and (iii) a naive baseline based on simplicial meshing on uniform space where every node's next node is itself. The three interpolation schemes are measured by mean squared error (MSE), peak signal-to-noise ratio (PSNR), and running time. Compared with magnetic-following interpolation, our method significantly reduces the computational cost of all visualization tasks, because no magnetic-line tracing is involved for the interpolation.  We recognize that the piecewise linear (PL) approximation introduces error, and thus we treat the magnetic-following interpolation as the ground truth and quantitatively evaluate the error resulting from our method.  We also work together with domain scientists to evaluate our results qualitatively.

\textbf{Toroidal Upsampling (Forward Approximation)}. The objective of the toroidal upsampling case study is to upsample ITER's 3D scalar field data, which are simulated with relatively low resolution (e.g., $n_\phi=16$) uniformly along the toroidal direction at $\phi=2\pi i/16$ ($i=0,1,...,15$).  With each given $\phi$, science variables (e.g., density and temperature) are given at nodes on 2D triangular meshes.  We interpolate simulation outputs to obtain values of nodes at two arbitrary $\phi$'s, $(5/32)2\pi$ and $(27/32)2\pi$, which are both in the middle of two consecutive poloidal planes, to retrieve a much higher toroidal resolution of the data by the three interpolation methods. Results of the three methods and quantitative evaluation are shown in~\Cref{fig:unsampling} and~\Cref{tab:performance}, respectively. Our method shows a similar pattern of values, low MSE and high PSNR, and significantly less running time ($O(10^4)$ faster) when compared with magnetic-following interpolation. See how the running time varies over $\phi$ in Supplemental Materials.

We also uniformly took 1,024 poloidal planes over $\phi\in[0,2\pi)$ to show the tendency of PSNR with varying $\phi$ (\Cref{fig:phi_vs_psnr}), taking magnetic-following interpolation results as ground truth. Cells in the naive baseline do not conform with the variation in deforming space and thus interpolate the value of a position with irrelevant mesh nodes. Therefore, our method is expected to perform better than the baseline. As shown in~\Cref{fig:phi_vs_psnr}, The results of our method are close to ground truth when $\phi$ approaches poloidal planes and show worst accuracy when $\phi$ is in the middle of two poloidal planes. Also, the local minima of our method are greater than those of the naive baseline.

\textbf{Volume Rendering (Forward Approximation)}. We implement and compare two ways to volume render the scalar function $f$ in ITER data with the toroidal coordinate straightened, which transfers $(R,Z,\phi)^\intercal$ into $(x,z,y)^\intercal$, respectively, and renders in Cartesian coordinate system (\Cref{fig:vol_ren}). We can see a rotating pattern of values in both rendering images. However, the naive baseline fails to show a continuous pattern. The reason is that our method and naive baseline use different nodes in interpolation at the same location. Our method is based on a simplicial mesh of deforming space that relates each mesh node to those close to its location at the previous or next poloidal plane, while the naive baseline always uses nodes at the same location for interpolation, which fails to incorporate any deforming property in the underlying physics.

\textbf{Isosurfacing (Inverse Approximation)}. As shown in~\Cref{fig:isosurfacing}, isosurfaces show the same rotating pattern as volume rendering results in~\Cref{fig:vol_ren}. Again, our method gives continuous isosurfaces while the naive baseline yields ``dashed'' lines. Also, when we increase the isovalue to 0.2, then isosurfaces are mainly concentrated in only a small area that is around the saddle point in the magnetic field in ITER data, which also verifies the correctness of isosurfacing results in~\Cref{fig:isosurfacing}. In order to show more interesting features such as blobs, well-chosen isovalues are required, which is beyond the scope of this work.

\section{Discussion}
\label{sec:discussion}

Simplicial mesh for deforming 3D space makes it possible to derive function values at any given 2D position and any given $\phi$ simply by PL interpolation. It gives results with comparable quality (measured by MSE and PSNR) compared with those given by physical derivation while taking significantly less time. It also shows why the naive interpolation is not applicable: our method builds reasonable connectivities between two successive poloidal planes, whereas the naive interpolation relates a 3D point with irrelevant mesh nodes. Furthermore, it enables the inverse approximation, a solution for root-finding problems, which cannot be achieved by tracing magnetic lines.

\textbf{Difference between our method and general-purpose tetrahedralization methods}. Our method and general-purpose tetrahedralization methods differ in goals, constraints, algorithms, and time complexity. First, we aim for a fast and invertible approximation of deforming 3D space, constrained by interior node connectivities that represent the physical behavior of deformation. Differently, general-purpose tools like TetGen and Gmsh mainly focus on filling a 3D domain bounded by a surface triangular mesh, often encountering difficulties with twisting interior edge constraints. Second, our method follows a divide-and-conquer paradigm that uses interior edges to partition the domain, resulting in lower time complexity ($O(|\mathcal{V}|)$) as well as the potential for parallelization. However, general-purpose methods, typically based on Delaunay triangulation, exhibit higher time complexity (up to $O(|\mathcal{V}|^2)$ in our problem) and are less amenable to parallelization. Finally, general-purpose methods add Steiner points to improve tetrahedral quality (i.e., how close a tetrahedron is to an equilateral shape), but these Steiner points are unnecessary in our problem, as connectivity constraints alone guarantee sufficient approximation accuracy.

\textbf{Limitations}. Although this work does not require the identity of spatial meshes over $\phi$ or any properties regarding the next node connectivities, there are still several limitations.  First, the decision tree search can be improved, such as by extending the search of new connections to $n$-hop neighbors, finding a better heuristic for choosing a pivot node, and parallelizing the branch search to reduce its time complexity for the worst case. Second, the indivisible problem is not completely solved. For example, we do not triangulate ill-posed prisms but instead increase the toroidal resolution to avoid them. Also, this work provides no methods to check indivisible polyhedra except for prisms without dividing them first. Third, we do not continue with figuring out the optimal balance between the number of nodes in cycles formed by cutting edges and the complexity of triangulating and coordinating the cross sections.

\textbf{Future work}. Besides solving the previously mentioned limitations, this work has several possible extensions. One possible way is to extend the problem from deforming 3D space to multi-D deforming space. For example, if we have a series of 3D tetrahedral meshes as input, then it extrudes a 4D space. Triangulating a 4-polytope into 4-simplices with all similar constraints involves more problems. Another promising work could be extending the 2D triangular mesh to 2D mesh with any polygon as cells. Additional work is triangulating the polygonal cells while guaranteeing that the whole space is still triangulable. A third future work could be using other interpolation methods, such as neural networks, to increase interpolation quality further. PL interpolation may not be the best model of value distribution in cells. Although we can increase toroidal resolution to improve interpolation accuracy, upsampling via magnetic-following interpolation is still time-consuming.

\section{Conclusion}
\label{sec:conclusion}

We propose an algorithm based on the divide-and-conquer paradigm to triangulate 3D nonconvex deforming space with geometric and connectivity constraints. Specifically, in the divided stage, we split the space into smaller 3D partitions to reduce time complexity. Cutting edges are well chosen to guarantee the independence of subdividing any partition; communication between partitions on cross sections is also taken care of. In the conquer stage, we eliminate nodes in a 3D partition one by one through a decision-tree-based search. This tree makes decisions on which node to eliminate and how to eliminate it. Our algorithm provides a scheme for and reduces the complexity of many visualization and analysis tasks, such as upsampling, volume rendering, and isosurfacing, on deforming 3D space, especially those with rotating features. We evaluate the algorithm both quantitatively and qualitatively on XGC datasets, which verifies its high accuracy compared with naive space interpolation and low time complexity compared with TetGen and the magnetic-line-following method.

\acknowledgments{%
he authors thank Drs. Choong-Seock Chang, Michael Churchill, Robert Hager, Seung-Hoe Ku, Zeyu Guo, and Rephael Wenger for insightful discussions. This research is supported by DOE DE-SC0022753.  It is also supported by the U.S. Department of Energy, Office of Advanced Scientific Computing Research, Scientific Discovery through Advanced Computing (SciDAC) program of the U.S. Department of Energy under Contract No. DE-AC02-06CH11357.
}

\bibliographystyle{abbrv-doi-hyperref}

\bibliography{references}



\end{document}


\firstsection{Additional Evaluation}
\maketitle

\begin{figure}[!th]
    \centering
    \includegraphics[width=\columnwidth]{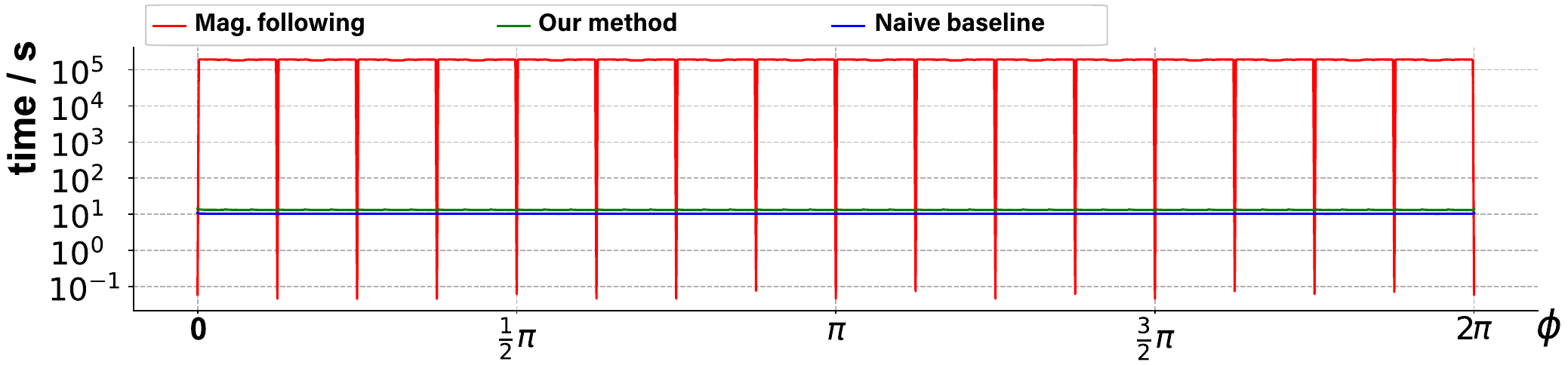}
    \caption{
    $\phi$ vs. time for magnetic-following interpolation, our method, and the naive baseline on XGC data. Naive baseline is slightly faster than our method, while magnetic-following method usually needs $O(10^4)$ times compared with the other two. On the sampled poloidal planes, magnetic-following method takes less than one second because only a 2D cell locating and a 2D barycentric interpolation are required for it on sampled poloidal planes.
    }
    \label{fig:phi_vs_time}
\end{figure}

\textbf{Extremum Lines (Inverse Approximation)}. To study 3D blob filaments, we co-designed with scientists the definition of \emph{blob core lines} as the extremum lines---loci of local minimum/maximum where radial and axial gradients of $f$ vanish and the radial-axial Hessian $\mathbf{H}_f$ is positive-/negative-definite:
\begin{equation}
  \frac{\partial f}{\partial R} = \frac{\partial f}{\partial Z} = 0 \mbox{~and~} \lambda_1\lambda_2>0,
\end{equation}
where $R$ and $Z$ are the radial and axial coordinates, respectively, and 
$\lambda_1$ and $\lambda_2$ are the eigenvalues of the radial-axial Hessian $\mathbf{H}_f$, which considers only $R$ and $Z$ axes:
\begin{equation}
  \mathbf{H}_f = 
  {\begin{pmatrix} \frac{\partial^2 f}{\partial R^2} & \frac{\partial^2 f}{\partial R\partial Z}\\
  \frac{\partial^2 f}{\partial Z\partial R} & \frac{\partial^2 f}{\partial Z^2}
  \end{pmatrix}}.
\end{equation}
%
We reformulate the extraction of extremum lines as a critical point tracking problem by treating $\phi$ as time and use the feature tracking kit FTK~\cite{guo2021ftk} to extract and visualize the curves.  FTK assumes that the gradient vector field is piecewise linear so that critical point trajectories are reconstructed directly in a space mesh, which is by default nondeforming; we made minor changes to FTK software to support deformed space meshes.  Figure~\ref{fig:filament} demonstrates extremum line extraction results with our deformed mesh.  Qualitatively speaking, the resulting curves reflect the same trends that scientists observe in isosurface and volume rendering; we plan to investigate further the accuracy evaluation of extremum lines in future work.




\begin{figure}
    \centering
    \includegraphics[width=0.65\columnwidth]{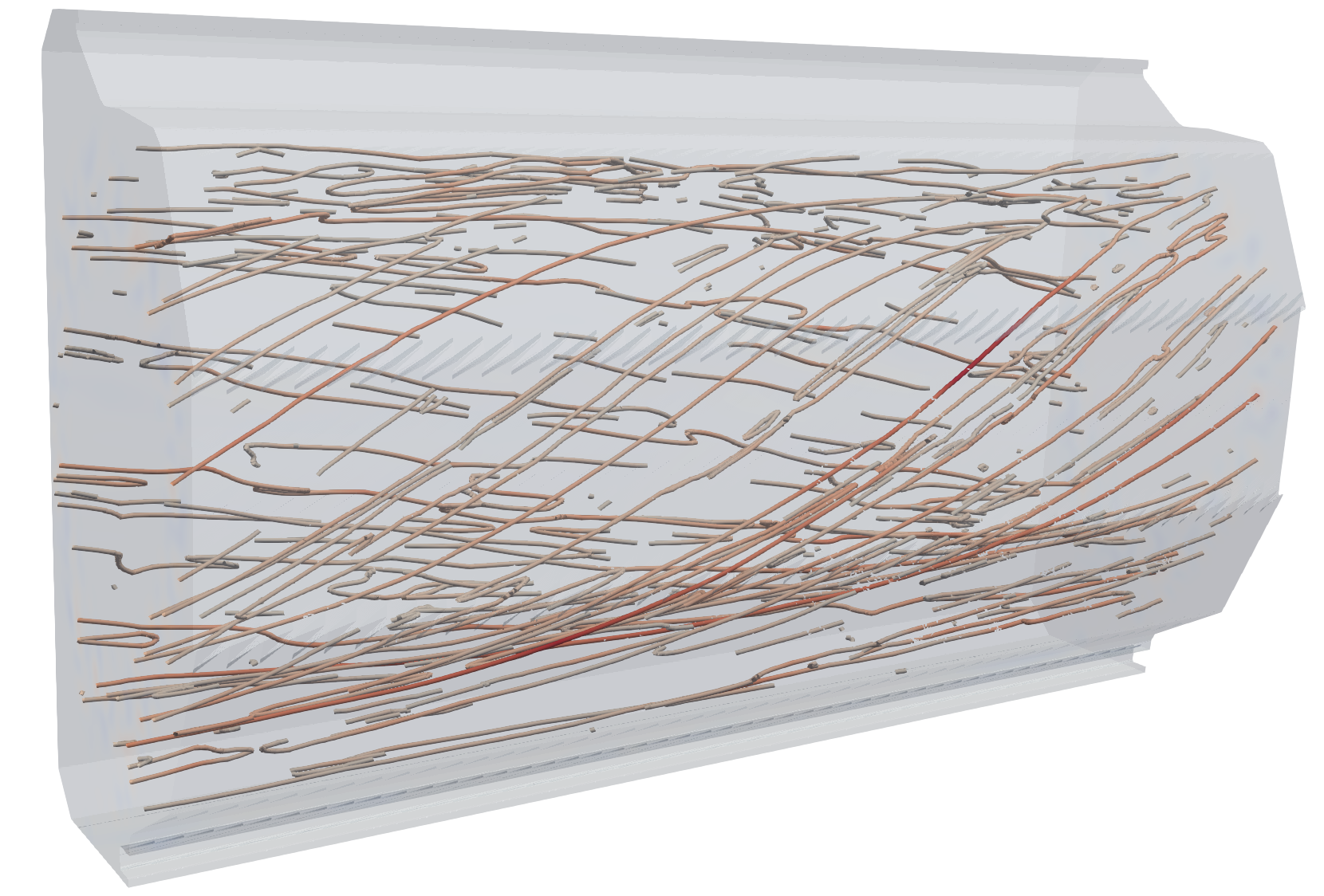}
    \caption{Extremum line extraction in XGC data with our deformed mesh.} 
    \label{fig:filament}
\end{figure}

\section{Additional Case Study with Synthetic Data}

The additional case study demonstrates the generality of our methodology beyond fusion. The data are synthesized by rotating two Gaussian blobs with phase difference $\pi$ around the center of a circular field. We set the radius of the circular field to be 1 and the full width at half maximum of two blobs to be 0.3, and we uniformly choose 16 $\phi$'s in one cycle ($2\pi$). 
The synthetic function is 
\begin{align}
\begin{split}
    f(x,y,t)&=Ce^{-B((x-A\cos\omega t)^2+(y-A\sin\omega t)^2))}\\
    &+Ce^{-B((x+A\cos\omega t)^2+(y+A\sin\omega t)^2))},
\end{split}
\end{align}
%
where $C=9.806$, $B=30.807$, and $\omega=0.393$.  
Values at several timesteps are shown in \Cref{fig:synthetic}. Discretization of this 2D field gives 1,000 nodes and 1,903 triangle cells (\Cref{fig:synthetic_mesh}~{(a)}). Next nodes of nodes are determined by their positions at the next timestep, which forms a space with 2,000 nodes (\Cref{fig:synthetic_mesh}~{(c)}). Our algorithm yields 5,961 tetrahedron cells for this space. We conduct the same evaluation as in the main document on this synthetic dataset.

\begin{figure}
    \centering
    \includegraphics[width=\columnwidth]{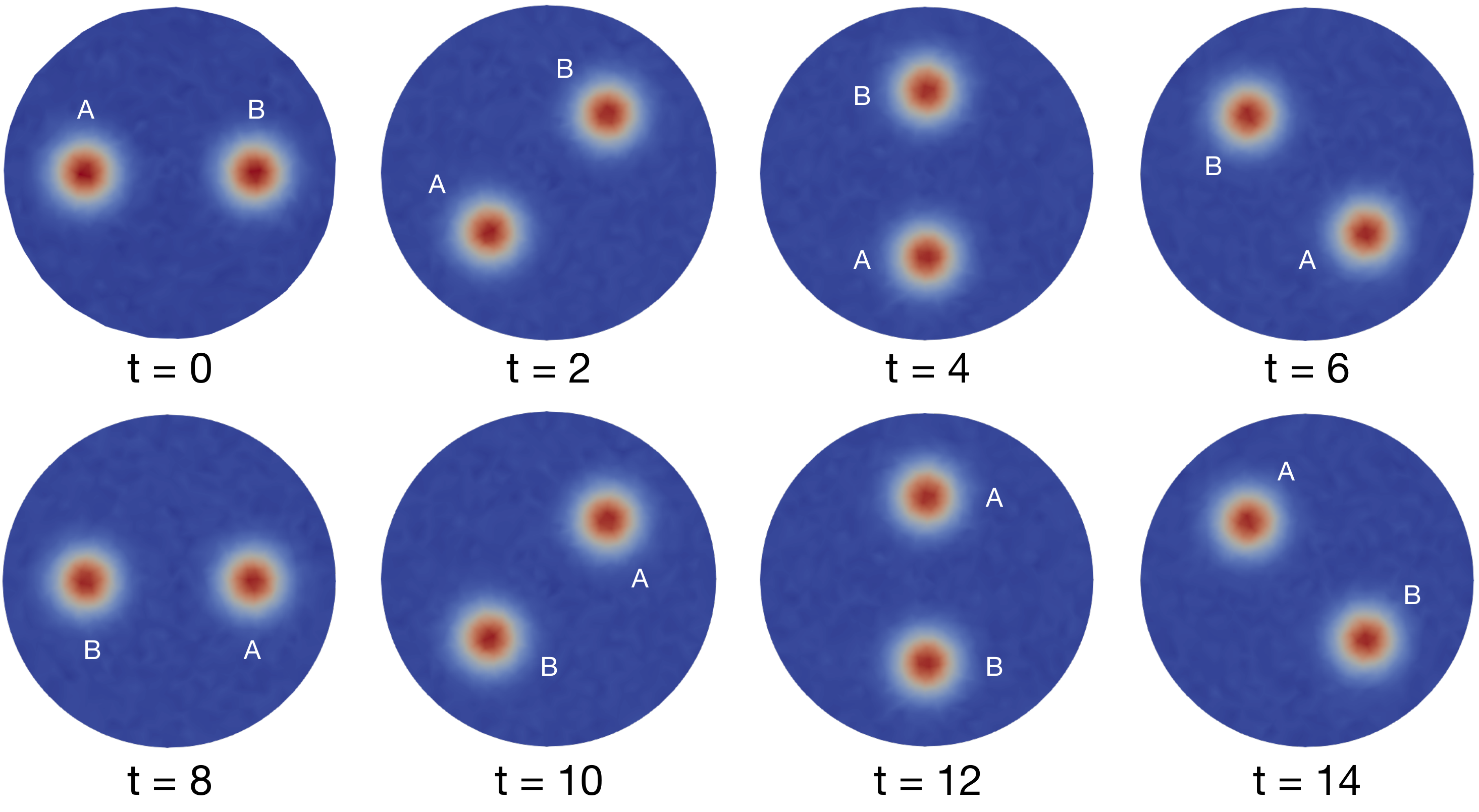}
    \caption{Synthetic dataset showing rotation of two Gaussian blobs, denoted by A and B, around the center of a circular field with cycle $T=16$. This time-varying dataset also introduces a deforming space.}
    \label{fig:synthetic}
\end{figure}

\begin{figure}
    \centering
    \includegraphics[width=\columnwidth]{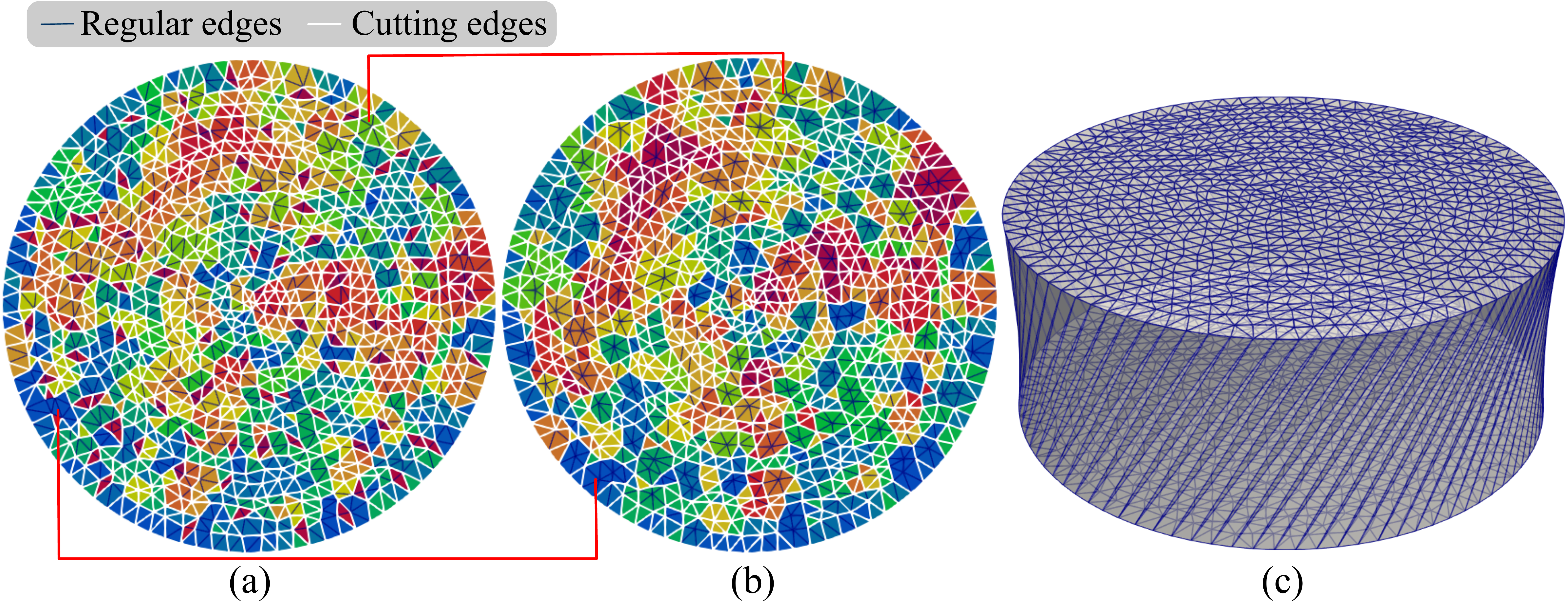}
    \caption{
    Mesh for synthetic data. (a) and (b) show the original 2D triangular mesh of a circular field with 1,000 nodes and 1,903 cells. Spatial-connected components in lower and upper meshes are differentiated by a categorical colormap in (a) and (b), respectively, while white edges highlight cutting graphs. Every red line connects two matched components. (c) shows the deformed space mesh with 2,000 nodes. Lateral edges indicate the rotation of nodes. All edges are straight lines, although they create a curved visual effect on two sides. The deformed mesh generated by our algorithm yields 5,841 cells for these 2,000 nodes.}
    \label{fig:synthetic_mesh}
\end{figure}

\begin{figure}
    \centering
    \includegraphics[width=\columnwidth]{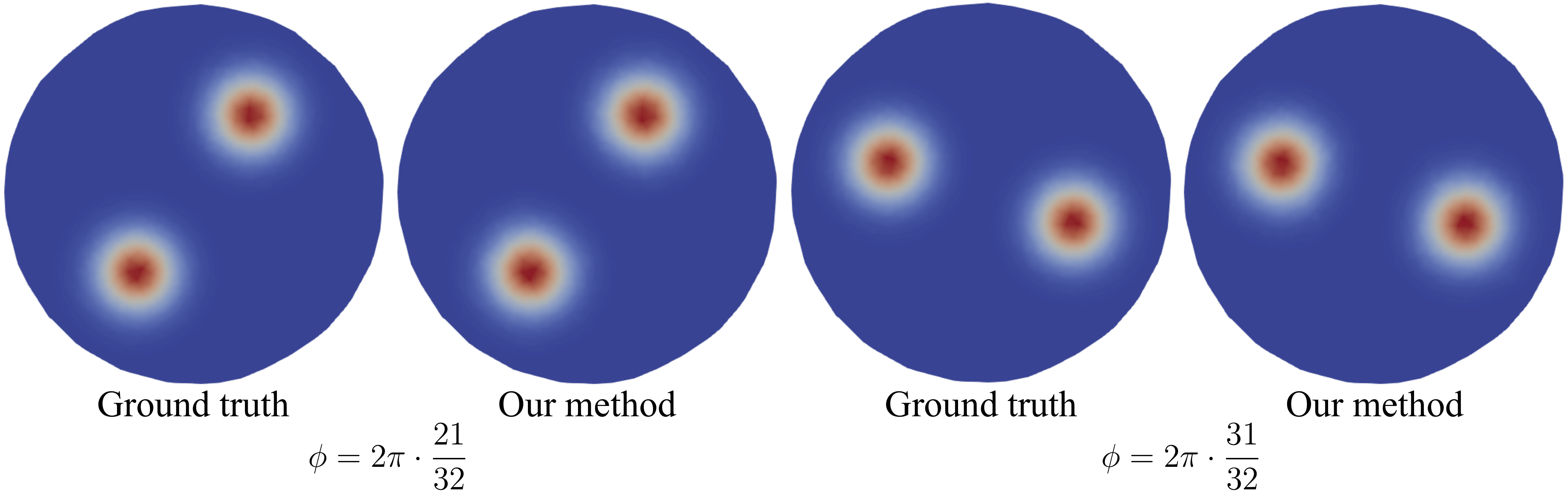}
    \caption{Ground truth and upsampling results of our method on the synthetic dataset. Our method at $\phi=(21/32)2\pi$ has MSE 0.074 and PSNR 50.718 dB; that at $\phi=(31/32)2\pi$ has MSE 0.191 and PSNR 50.593 dB.}
    \label{fig:synthetic_unsampling}
\end{figure}

\textbf{Temporal upsampling}. We again choose two arbitrary timesteps that are in the middle of two successive sampled timesteps, corresponding to $\phi=(21/32)2\pi$ and $\phi=(31/32)2\pi$ in toroidal coordinates. The comparison of our method and ground truth is shown in \Cref{fig:synthetic_unsampling}, which shows a high PSNR. \Cref{fig:synthetic_phi_vs_psnr} shows the variation of PSNR over toroidal coordinates on both our method and the naive baseline. Since there are 16 sampled timesteps, there are 16 arcs in \Cref{fig:synthetic_phi_vs_psnr}. Each arc attains its maxima at two sampled toroidal angles and its minimum at the middle of these two angles. Also, our method has a higher minimum PSNR than does the naive baseline, meaning the naive baseline gives wrong interpolation results.

\begin{figure}[!th]
    \centering
    \includegraphics[width=\columnwidth]{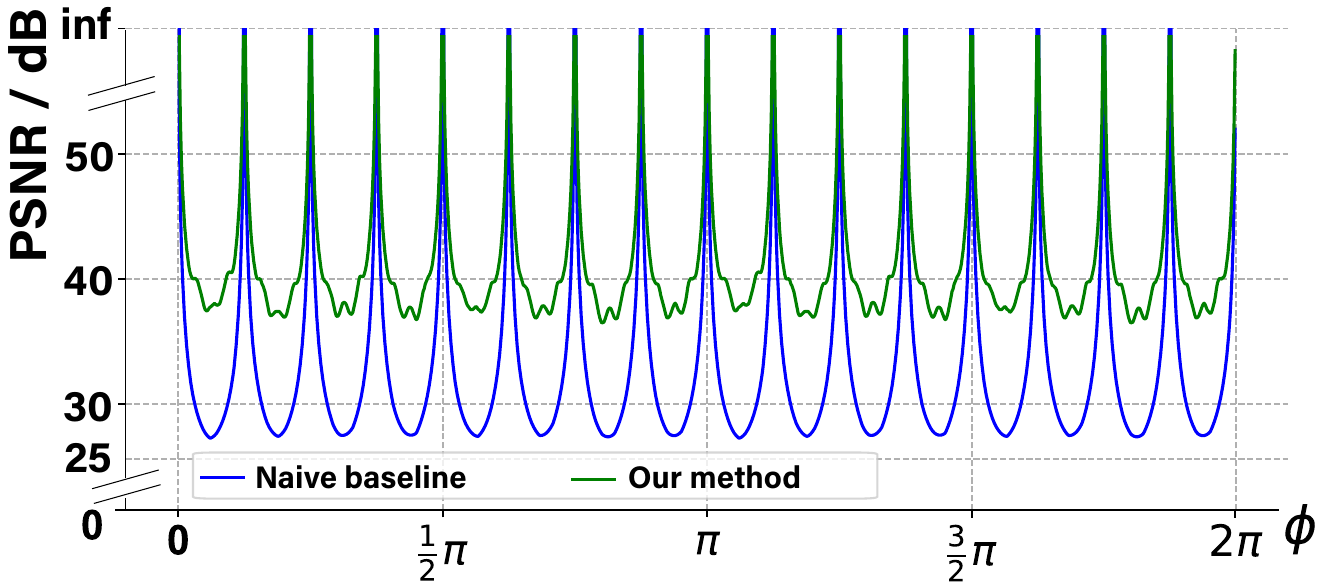}
    \caption{$\phi$ vs. PSNR of our method and the naive baseline on synthetic dataset. We evenly sample 16 timesteps, so there are 16 infinity values on both curves. All other observations agree with those in the main document.}
    \label{fig:synthetic_phi_vs_psnr}
\end{figure}

\begin{figure}[!th]
    \centering
    \includegraphics[width=0.75\columnwidth]{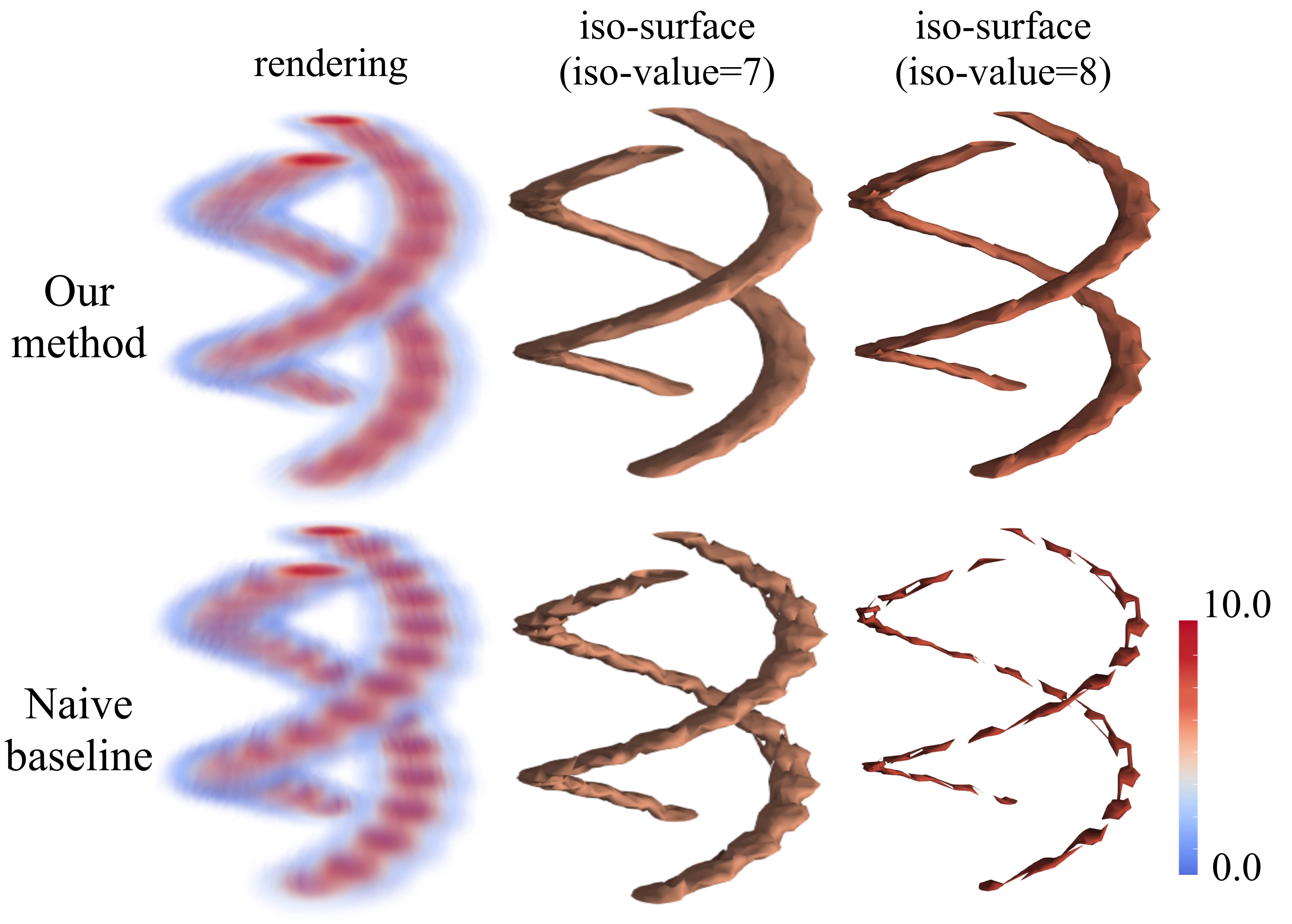}
    \caption{Volume rendering results and isosurfaces for our method and the naive baseline on synthetic data. Similar to the results in the main document, our method yields continuous features while the naive baseline breaks the features.}
    \label{fig:synthetic_vol}
\end{figure}

\textbf{Volume rendering and isosurfacing}. Results for both our method and naive interpolation are shown in \Cref{fig:synthetic_vol}. 
our method reflects the correct rotation of two blobs, but the naive baseline gives a broken translation along the time axis.

\newpage

\section{Complete version of lists and figures}

\begin{table}[!t]
  \caption{%
  	Types of neighbors around pivot nodes, their triangulation ways, and corresponding separated tetrahedra. Nodes with superscripts are on the upper mesh while those without superscripts are on the lower mesh. Red lines represent newly added edges in triangulation. ``Pviot'' in the last column is replaced by ``p'' for short.
  }
  \label{tab:pivot_cases}
  \scriptsize%
  \centering%
  \begin{tabular}{m{0.1in}|m{0.2in}|m{0.8in}|m{0.85in}|m{0.5in}}
  	\toprule
  	Min deg & Pivot node & \tabincell{pivot node and\\its neighbors} & Volume triangulations & Separated tetrahedra \\
   \midrule
  	\multirow{2}*{3} & \tabincell{in\\lower\\mesh} & \includegraphics[width=0.7in]{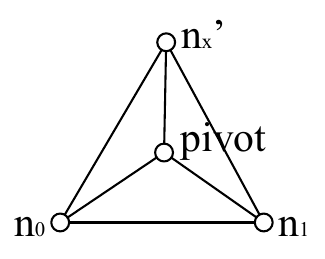} & - & \{$p$, $n_0$, $n_1$, $n_x'$\} \\ \cline{2-5}
  	& \tabincell{in\\upper\\mesh} & \includegraphics[width=0.7in]{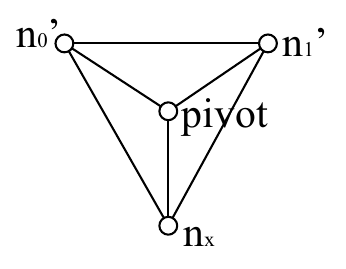} & - & \{$p$, $n_0'$, $n_1'$, $n_x$\} \\ \hline
  	\multirow{10}*{4} & \multirow{5}*{\tabincell{in\\lower\\mesh}} & \multirow{2}*{\includegraphics[width=0.7in]{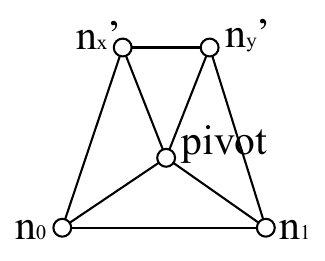}} & \includegraphics[width=0.7in]{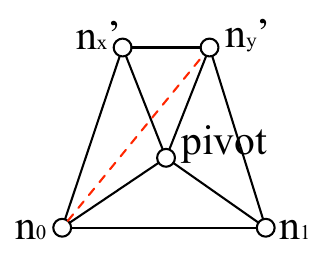} & \tabincell{\{$p$, $n_0$,\\$n_x'$, $n_y'$\},\\\{$p$, $n_0$,\\$n_1$, $n_y'$\}}\\ \cline{4-5}
  	& & & \includegraphics[width=0.7in]{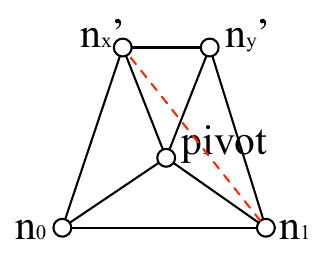} & \tabincell{\{$p$, $n_0$,\\$n_1$, $n_x'$\},\\\{$p$, $n_1$,\\$n_x'$, $n_y'$\}} \\ \cline{3-5}
  	& & \includegraphics[width=0.7in]{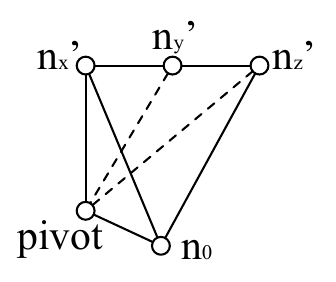} & \includegraphics[width=0.7in]{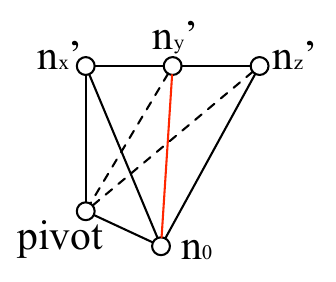} & \tabincell{\{$p$, $n_0$,\\$n_x'$, $n_y'$\},\\\{$p$, $n_0$,\\$n_y'$, $n_z'$\}} \\ \cline{3-5}
  	& & \includegraphics[width=0.6in]{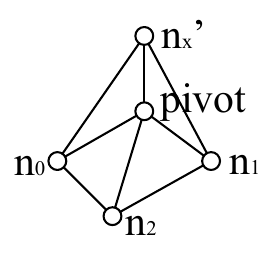} & \includegraphics[width=0.7in]{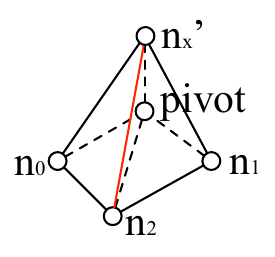} & \tabincell{\{$p$, $n_0$,\\$n_2$, $n_x'$\},\\\{$p$, $n_1$,\\$n_2$, $n_x'$\}} \\ \cline{3-5}
  	& & \includegraphics[width=0.6in]{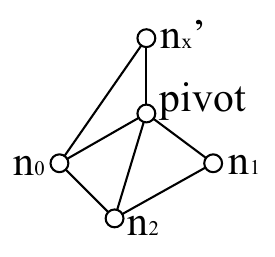} & \includegraphics[width=0.7in]{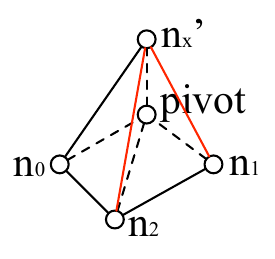} & \tabincell{\{$p$, $n_0$,\\$n_2$, $n_x'$\},\\\{$p$, $n_1$,\\$n_2$, $n_x'$\}} \\ \cline{2-5}
  	& \multirow{5}*{\tabincell{in\\upper\\mesh}} & \multirow{2}*{\includegraphics[width=0.7in]{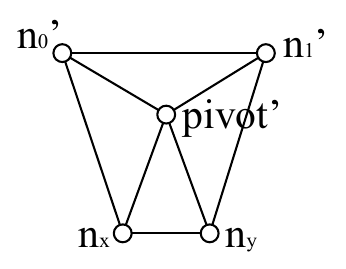}} & \includegraphics[width=0.7in]{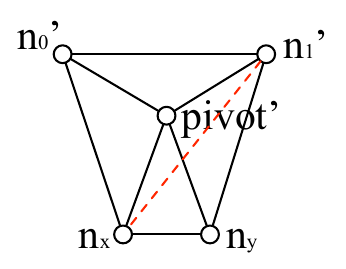} & \tabincell{\{$p$, $n_x$,\\$n_0'$, $n_1'$\},\\\{$p$, $n_x$,\\$n_y$, $n_1'$\}}  \\ \cline{4-5}
  	& & & \includegraphics[width=0.7in]{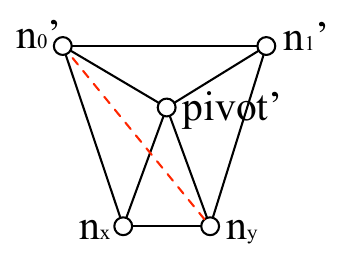} & \tabincell{\{$p$, $n_x$,\\$n_y$, $n_0'$\},\\\{$p$, $n_y$,\\$n_0'$, $n_1'$\}}  \\ \cline{3-5}
  	& & \includegraphics[width=0.6in]{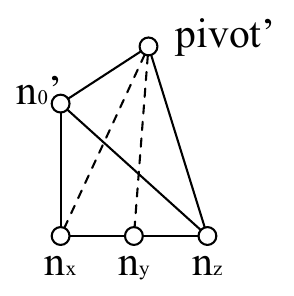} & \includegraphics[width=0.6in]{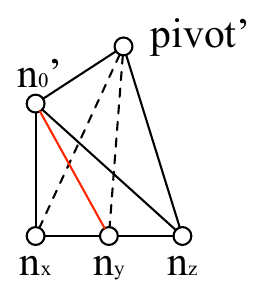} & \tabincell{\{$p$, $n_x$,\\$n_y$, $n_0'$\},\\\{$p$, $n_y$,\\$n_z$, $n_0'$\}} \\ \cline{3-5}
  	& & \includegraphics[width=0.6in]{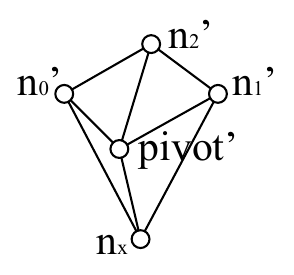} & \includegraphics[width=0.7in]{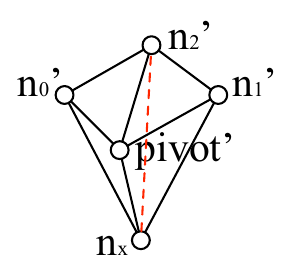} & \tabincell{\{$p$, $n_x$,\\$n_0'$, $n_2'$\},\\\{$p$, $n_x$,\\$n_1'$, $n_2'$\}} \\ \cline{3-5}
  	& & \includegraphics[width=0.6in]{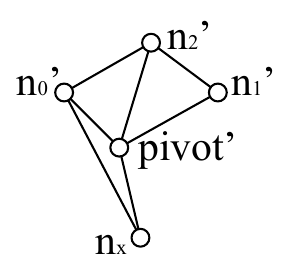} & \includegraphics[width=0.7in]{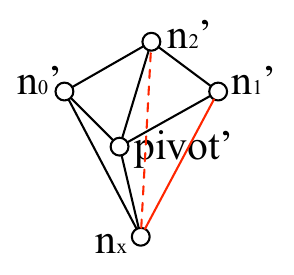} & \tabincell{\{$p$, $n_x$,\\$n_0'$, $n_2'$\},\\\{$p$, $n_x$,\\$n_1'$, $n_2'$\}} \\
  	\bottomrule
  \end{tabular}%
\end{table}

\begin{figure}[!ht]
    \centering
    \includegraphics[width=0.458\textwidth]{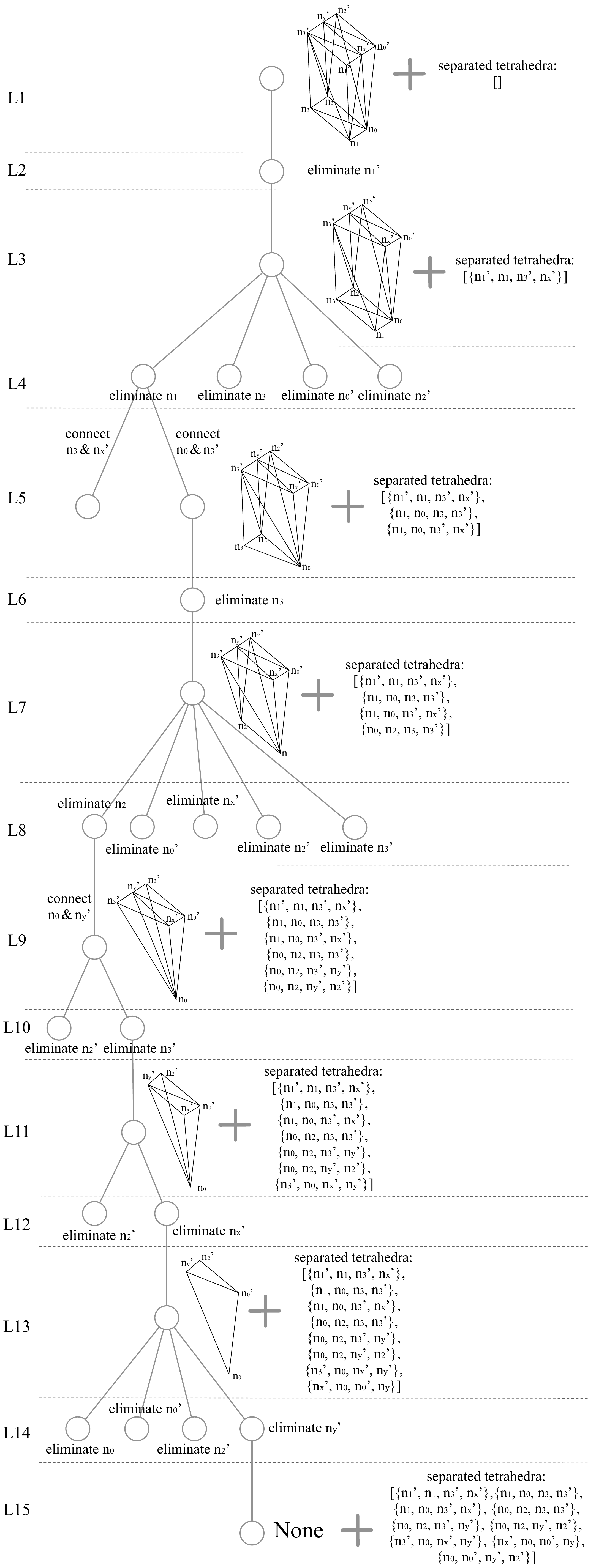}
    \caption{Node elimination algorithm shown in a decision-tree paradigm. In odd levels, it makes a decision on ``which mesh node to eliminate'' among mesh nodes with minimum degree; in even levels, it makes a decision on ``how to eliminate'' based on the isolated polyhedron formed by the pivot node and its neighbors.
    }
    \label{fig:node-elimination}
\end{figure}

\begin{table}[h]
  \caption{%
  	Check list for valid subdivision of deformed prisms. If a triangulation is valid, then the two corresponding conditions regarding whether two points are on different sides of a plane shown in the last three columns should  both be satisfied. If one of the points is in the plane, then there is a tetrahedron degenerating into a quadrilateral. The example shown in the third column is an ill-posed prism, which is created by two equilateral triangles where one can be aligned with the other by rotating it by $\pi/2$ around its centroid. The ill-posed prism cannot be triangulated by any of the six triangulations.
  }
  \label{tab:ill_cases}
  \centering%
  \begin{tabular}{m{0.1in}|m{0.45in}|m{0.8in}|m{0.2in}|m{0.2in}|m{0.6in}}
  	\toprule
  	\multirow{2}*{ID} & \multirow{2}*{\tabincell{Face\\separations}} & \multirow{2}*{Subdivision} & \multicolumn{3}{c}{Conditions to be checked} \\ \cline{4-6}
   & & & pt 1 & pt 2 & plane \\
   \midrule
   \end{tabular}
  \renewcommand{\arraystretch}{2.3}
  \begin{tabular}{m{0.1in}|m{0.45in}|m{0.8in}|m{0.2in}|m{0.2in}|m{0.6in}}
  	\multirow{3}*{1} & $a_0-b_1$ & \multirow{3}*{\includegraphics[width=0.8in]{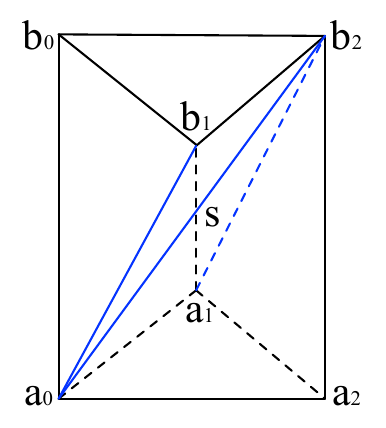}} & $a_2$ & $b_1$ & $a_0-a_1-b_2$ \\
   & $a_1-b_2$ & & & & \\
   & $a_0-b_2$ & & $a_1$ & $b_0$ & $a_0-b_1-b_2$ \\ \hline
  	\multirow{3}*{2} & $a_0-b_1$ & \multirow{3}*{\includegraphics[width=0.8in]{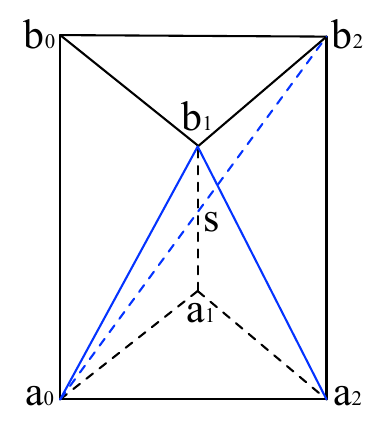}} & $a_1$ & $b_2$ & $a_0-a_2-b_1$ \\
   & $a_2-b_1$ & & & & \\
   & $a_0-b_2$ & & $a_2$ & $b_0$ & $a_0-b_1-b_2$ \\ \hline
  	\multirow{3}*{3} & $a_1-b_0$ & \multirow{3}*{\includegraphics[width=0.8in]{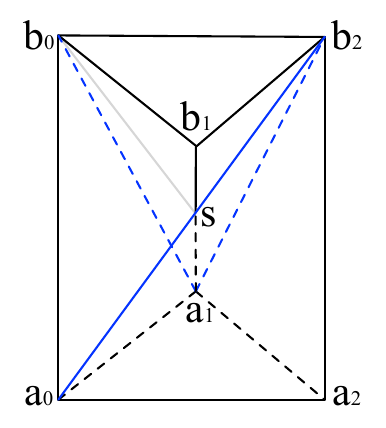}} & $a_2$ & $b_0$ & $a_0-a_1-b_2$ \\
   & $a_1-b_2$ & & & & \\
   & $a_0-b_2$ & & $a_0$ & $b_1$ & $a_1-b_0-b_2$ \\ \hline
  	\multirow{3}*{4} & $a_0-b_1$ & \multirow{3}*{\includegraphics[width=0.8in]{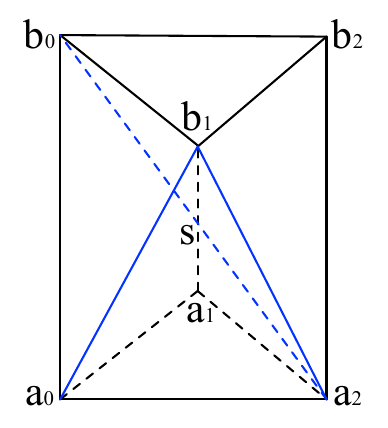}} & $a_1$ & $b_0$ & $a_0-a_2-b_1$ \\
   & $a_2-b_1$ & & & & \\
   & $a_2-b_0$ & & $a_0$ & $b_2$ & $a_2-b_0-b_1$ \\ \hline
  	\multirow{3}*{5} & $a_1-b_0$ & \multirow{3}*{\includegraphics[width=0.8in]{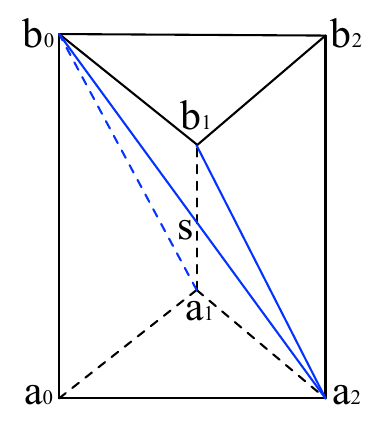}} & $a_0$ & $b_1$ & $a_1-a_2-b_0$ \\
   & $a_2-b_1$ & & & & \\
   & $a_2-b_0$ & & $a_1$ & $b_2$ & $a_2-b_0-b_1$ \\ \hline
  	\multirow{3}*{6} & $a_1-b_0$ & \multirow{3}*{\includegraphics[width=0.8in]{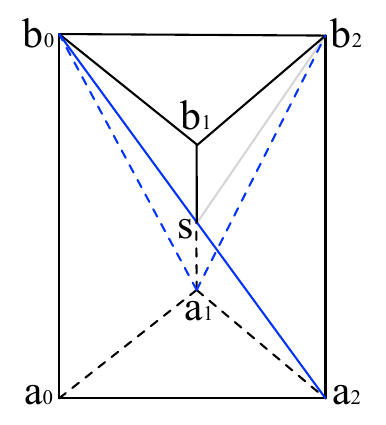}} & $a_0$ & $b_2$ & $a_1-a_2-b_0$ \\
   & $a_1-b_2$ & & & & \\
   & $a_2-b_0$ & & $a_2$ & $b_1$ & $a_1-b_0-b_2$ \\
  	\bottomrule
  \end{tabular}%
\end{table}

\bibliographystyle{abbrv-doi-hyperref}

\bibliography{references}

